\documentclass[11pt,a4paper]{article}

\usepackage{fontspec}
\usepackage{amsmath,amssymb,amsthm}
\usepackage[margin=1in]{geometry}
\usepackage{hyperref}
\usepackage{graphicx}
\usepackage{tikz}
\usetikzlibrary{arrows.meta,positioning,fit,backgrounds}

\usepackage{polyglossia}
\setmainlanguage{english}
\setotherlanguage{arabic}
\newfontfamily\arabicfont[Script=Arabic]{Amiri-Regular.ttf}

\newtheorem{theorem}{Theorem}
\newtheorem{proposition}{Proposition}
\newtheorem{lemma}{Lemma}
\newtheorem{corollary}{Corollary}
\theoremstyle{definition}
\newtheorem{definition}{Definition}
\newtheorem{remark}{Remark}

\title{A Model-Theoretic Formalization of Al-Khw\={a}rizm\={i}'s
\textit{'ilm al-far\={a}'i\={d}}}

\author{Elnaserledinellah Mahmoud Abdelwahab\\
\href{mailto:elnaser@computational-fiqh.org}
{\texttt{elnaser@computational-fiqh.org}}}

\date{}

\begin{document}
\maketitle

\begin{abstract}
This paper presents a model-theoretic and computational formalization of
Al-Khw\={a}rizm\={i}'s treatment of inheritance (\textit{`ilm
al-far\={a}'i\d{d}}), within the framework of Computational Fiqh. We
first show that every inheritance configuration in a specified class is
representable as a finite system of rational linear equalities and
inequalities, and that such systems are decided exactly by
Fourier--Motzkin elimination over $\mathbb{Q}$. We then define the
fixed-dimensional inheritance structure $\mathcal{V}_d=\mathbb{Q}^d$
with distinguished claimant coordinates, and prove that every admissible
rational linear-coordinate realization is isomorphic to $\mathcal{V}_d$;
this establishes the algebraic representation as \emph{canonical} within
that class. The complete first-order theory $T_d$ of $\mathcal{V}_d$ is
shown to be complete by construction---a design property of the
formalization, not a novel theorem---and is explicitly distinguished
from the decidability of the inheritance-relevant fragment. The
methodological principle is then related to a broader formalization of
closed Fiqh domains via exhaustive case generation, Horn-clause
axiomatization, and terminating deduction, yielding
\emph{domain-relative computational completeness}.
\end{abstract}

\section{Introduction}

Al-Khw\={a}rizm\={i}'s treatment of inheritance
(\textit{'ilm al-far\={a}'i\={d}}) is significant not merely because it
provides solutions to individual numerical problems, but because it
provides general computational procedures for specified classes of
inheritance configurations. The underlying methodological principle is
the complete computational treatment of a delimited jurisprudential
domain.

The purpose of this paper is to give that principle a modern formal
interpretation. A specified inheritance problem is represented by a
finite system of linear equations and inequalities over the rational
numbers. For a fixed inheritance configuration, the resulting
mathematical structure can be represented by a finite-dimensional
rational coordinate space. This permits a model-theoretic
characterization of the intended structure together with a separate
algorithmic analysis of the corresponding constraint systems.

For a long period, Western ``Islamic Studies'' provided the principal
academic setting in which questions concerning Fiqh, its branches, and
its related sciences were investigated within modern academia. The
methodological traditions that developed within this field, however,
were not designed primarily for the analysis of formally specified
mathematical, logical, or computational systems. This distinction
becomes consequential when some branches of Fiqh, such as the
jurisprudence of inheritance, are understood to contain precisely such
systems: they involve defined quantities, algebraic relations,
dependencies, exhaustive cases, and procedures for deriving
determinate results.

The resulting epistemological clash is a central concern of this
paper. A number of modern discussions within Islamic Studies have
advanced claims that particular Fiqh rules or systems are legally,
mathematically, or logically inconsistent. Such claims cannot be
established merely by producing a contradiction under an informal
interpretation of selected texts or rules. In particular, a fundamental
distinction must be maintained between linguistic ambiguity and formal
coherence. The existence of several linguistically possible
interpretations does not by itself imply that the underlying formal
system is inconsistent. If a textual formulation admits several
candidate interpretations
\[
I_1,I_2,\ldots,I_k,
\]
then a contradiction obtained under one interpretation establishes, at
most, the inconsistency of the corresponding formalization; it does not
establish the inconsistency of the underlying jurisprudential system
unless that interpretation has first been justified as the relevant
one. Conversely, the existence of an interpretation under which the
resulting formal system is coherent is mathematically significant even
when other interpretations are inconsistent. This distinction is
essential whenever linguistic material is used as the basis for claims
about mathematical or logical properties.

Section~2 examines this methodological problem directly. It considers
claims concerning legal, mathematical, and logical inconsistency and
the conceptual assumptions underlying them, including the failure to
distinguish linguistic ambiguity from formal inconsistency. The section
argues that conclusions about the consistency or inconsistency of a
jurisprudential computational system require an explicit formal
reconstruction and analysis of that system. The present paper then
provides such a reconstruction and uses the methods of Computational
Fiqh to determine its mathematical and logical properties explicitly.

Computational Fiqh emerges in this context not merely as the application
of computational tools to an already established body of
jurisprudential scholarship, but as a distinct methodological framework
for identifying, formalizing, analyzing, verifying, and implementing
computational structures occurring within delimited Fiqh domains. Its
epistemological significance is therefore also critical: where a
jurisprudential domain gives rise to a formally specifiable system,
claims about that system should be settled by rigorous formal analysis
rather than by methodologies incapable of representing the system
adequately. The present study uses the inheritance system as a concrete
case in which this distinction can be demonstrated. This contrast is visualized in
Figure~\ref{fig:epistemological-distinction}.


\begin{figure}[t]
\centering
\resizebox{\textwidth}{!}{%
\begin{tikzpicture}[
    >=Latex,
    box/.style={
        draw,
        rounded corners=2pt,
        align=center,
        minimum width=38mm,
        minimum height=8mm,
        font=\small
    },
    titlebox/.style={
        draw,
        rounded corners=3pt,
        align=center,
        minimum width=40mm,
        minimum height=9mm,
        font=\bfseries\small
    },
    arrow/.style={
        -Latex,
        thick
    }
]


\node[
    box,
    minimum width=70mm
] (material) at (0,0) {
    \textbf{Inheritance Material}\\
    Qur'anic texts, Prophetic traditions,\\
    jurisprudential rules, and historical works
};


\node[
    titlebox
] (is-title) at (-52mm,-20mm) {
    Western ``Islamic Studies''
};

\node[
    titlebox
] (cf-title) at (52mm,-20mm) {
    Computational Fiqh
};


\node[
    box
] (is-analysis) at (-52mm,-35mm) {
    Textual, historical,\\
    philological analysis
};

\node[
    box
] (is-interpretation) at (-52mm,-50mm) {
    Linguistic and conceptual\\
    interpretation
};

\node[
    box
] (is-claims) at (-52mm,-65mm) {
    Claims about consistency\\
    or contradiction
};

\node[
    box
] (is-conclusion) at (-52mm,-80mm) {
    \textbf{Conclusions about}\\
    \textbf{(in)consistency}
};


\node[
    box
] (cf-extraction) at (52mm,-35mm) {
    Extraction of legal rules\\
    and procedures
};

\node[
    box
] (cf-formal) at (52mm,-50mm) {
    Explicit formalization\\
    of the system
};

\node[
    box
] (cf-model) at (52mm,-65mm) {
    Mathematical model\\
    (equations, inequalities, domains)
};

\node[
    box
] (cf-analysis) at (52mm,-80mm) {
    Logical and algorithmic analysis\\
    (decidability, completeness, solvability)
};

\node[
    box
] (cf-result) at (52mm,-95mm) {
    \textbf{Formal properties established}\\
    \textbf{and computational implementation}
};


\draw[arrow]
    (material.south west) -- (is-title.north);

\draw[arrow]
    (material.south east) -- (cf-title.north);


\draw[arrow]
    (is-title) -- (is-analysis);

\draw[arrow]
    (is-analysis) -- (is-interpretation);

\draw[arrow]
    (is-interpretation) -- (is-claims);

\draw[arrow]
    (is-claims) -- (is-conclusion);


\draw[arrow]
    (cf-title) -- (cf-extraction);

\draw[arrow]
    (cf-extraction) -- (cf-formal);

\draw[arrow]
    (cf-formal) -- (cf-model);

\draw[arrow]
    (cf-model) -- (cf-analysis);

\draw[arrow]
    (cf-analysis) -- (cf-result);


\node[
    align=center,
    font=\small\bfseries
] (ambiguity) at (0,-36mm) {
    Linguistic ambiguity
};

\node[
    align=center,
    font=\small\bfseries
] (inconsistency) at (0,-52mm) {
    Formal inconsistency
};

\draw[
    -Latex,
    thick,
    dashed
] (ambiguity) -- node[
    fill=white,
    inner sep=2pt,
    font=\small\bfseries
] {$\not\Rightarrow$} (inconsistency);

\node[
    align=center,
    font=\scriptsize
] at (0,-60mm) {
    Candidate interpretations
};

\node[
    align=center,
    font=\scriptsize
] at (0,-65mm) {
    $\downarrow$
};

\node[
    align=center,
    font=\scriptsize
] at (0,-70mm) {
    Formalizations
};

\node[
    align=center,
    font=\scriptsize
] at (0,-75mm) {
    $\downarrow$
};

\node[
    align=center,
    font=\scriptsize
] at (0,-80mm) {
    Formal analysis
};


\begin{scope}[on background layer]

    \node[
        draw,
        rounded corners=5pt,
        dashed,
        inner sep=6mm,
        fit=(is-title)
            (is-analysis)
            (is-interpretation)
            (is-claims)
            (is-conclusion)
    ] {};

    \node[
        draw,
        rounded corners=5pt,
        dashed,
        inner sep=6mm,
        fit=(cf-title)
            (cf-extraction)
            (cf-formal)
            (cf-model)
            (cf-analysis)
            (cf-result)
    ] {};

\end{scope}

\end{tikzpicture}%
}

\caption{Epistemological distinction between Western "Islamic Studies" and Computational Fiqh in the treatment of the Inheritance System. Linguistic
ambiguity does not by itself entail formal inconsistency as claimed in "Islamic Studies". Computational
Fiqh makes the formal system explicit before mathematical and logical
properties are inferred.}
\label{fig:epistemological-distinction}
\end{figure}

Three formal properties are considered. First, the admissible
fixed-dimensional inheritance structure is canonical within its class
up to isomorphism. Second, its complete first-order theory is complete
by construction---a design property of the formalization rather than a
novel theorem. Third, the finite systems of rational linear constraints
arising from individual inheritance configurations are algorithmically
decidable.

These properties are distinct. Completeness of $T_d$ concerns the
first-order theory of the intended mathematical structure; algorithmic
decidability concerns the existential fragment generated by finite
constraint systems. Neither notion should be conflated with the
historical question of whether the original computational procedures
were expressed in the modern formal language used here.

The final part relates this construction to a broader formalization of
closed Fiqh domains based on exhaustive case generation, Horn-clause
axiomatization, and terminating deduction. This motivates the notion
of \emph{domain-relative computational completeness}: the computational
completeness of a formal system relative to a specified and delimited
domain, rather than an unrestricted claim about jurisprudential
reasoning as a whole.

\subsection{Contributions and non-claims}

\paragraph{Contributions.}
\begin{enumerate}
\item A \emph{representation lemma} (\S\ref{sec:rep}): every
inheritance configuration in the specified class is a finite rational
linear constraint system.
\item A \emph{canonicity theorem} (\S\ref{sec:canon}): every admissible
rational linear-coordinate model is isomorphic to $\mathbb{Q}^d$.
\item A \emph{decision procedure} for the inheritance-relevant fragment
(\S\ref{sec:dec}): exact, terminating, rational.
\item A \emph{methodological interpretation} for Computational Fiqh
(\S\ref{sec:fiqh}).
\end{enumerate}

\paragraph{Non-claims.}
\begin{enumerate}
\item Full decidability of the first-order theory $T_d$ is \emph{not}
established. The language $\mathcal{L}_d$ is uncountable and the paper
decides only the existential fragment over finite quantifier-free
linear constraints.
\item \emph{Methodological exclusivity} of linear algebra is \emph{not}
claimed. The claim is representational canonicity. As noted in
\S2.3, \textit{`awl} resolves $\sum s_i X=1$ and is algebra in
disguise.
\item No historical claim is made that Al-Khw\={a}rizm\={i} used
Fourier--Motzkin; the procedure is a modern formal reconstruction
(\S\ref{sec:worked}, final paragraph).
\item Domain-relative computational completeness is conditional on the
results of the companion framework cited in \S\ref{sec:fiqh}.
\end{enumerate}


\section{The Quranic origin of Islamic Inheritance Systems}

Contrary to other known inheritance systems, the fundamental normative
rules from which the formal axioms of the Islamic inheritance system can
be extracted are directly revealed in the Quran across three specific
verses. All subsequent legal and mathematical developments, including
the algebraic framework formalized by Al-Khw\={a}rizm\={i}, were
fundamentally interpretations and computational executions of these core
scriptural axioms \cite{Rashed1994, Berggren1986}.

While Western ``Islamic studies'' disciplines, which were historically
conceived under orientalist paradigms for colonial purposes
\cite{Said1978, Tibawi1964} are subjected since then to substantial
revisions in some important areas of investigation, focusing research
more on objective evidence than speculation, they still routinely refuse
to accept the sudden, unprecedented emergence of the Quranic revelation,
attributing its origins instead to a variety of unsupported factors. This stance has become an unfalsifiable doctrine-like hypothesis questioning many results reported to be scientific.

This historical bias might have remained unnoticed by the present work belonging to the emerging new field of
``Computational Islamic Jurisprudence (Fiqh)'', which is neither
theological nor historical, but normative and epistemologically similar
to empirical sciences, were it not for the fact that some historicist
speculations have overreached their qualitative boundaries to make
flawed assertions regarding both the mathematical correctness
\cite{Crone1987, Burton1990} and the legal coherence \cite{Powers2011}
of the underlying Quranic text. Such an analytical overreach is
reminiscent of the 1970s revisionist crisis, where the field's speculative theories regarding an alleged late
Abbasid origin of the Quran collapsed the moment they were confronted
with the hard, empirical boundary conditions of radiocarbon dating
\cite{Deroche2014, Sadeghi2012}.

Consequently, this section is epistemologically necessary from the
perspective of Computational Fiqh, whose core mandate is to analyze,
formalize and implement Fiqh chapters using rigorous
mathematical-logical methods, to address the structural limitations and
explanatory overreach found in these recent academic treatments of the
Quranic inheritance system.

While the "Islamic Studies" field often claims alignment with the rigor of
``critical rationalism'', the reliance on unobservable internal
motivations, retrospective narratives, and unfalsifiable interpretive
frameworks prevents the majority of its findings from qualifying as
genuine scientific findings under Karl Popper's own demarcation criteria
\cite{Popper1963, Hallaq2013}. Despite this lack of empirical
falsifiability, some practitioners frequently overextend their
qualitative tools to evaluate complex, internal mathematical structures
that lie entirely outside their discipline's scope, as well as their
linguistic and mathematical capacities. This boundary failure is vividly
illustrated when academics analyze the three foundational inheritance
verses of Sura Al-Nisa.

For instance, when confronting the Kalalah scenarios in verses 12 and
176, David Powers assumes that the term ``siblings'' \textarabic{إخوة}
(ikhwah) in both verses refers to the exact same biological relation. He
premises this on the claim that the syntax provides no explicit
differentiation between the two cases, leading him to declare the
underlying text legally incoherent and inherently contradictory and that
later scholars have ``fixed'' this incoherence. This assertion, however,
collapses under legal, linguistic, and formal logical scrutiny.

First, the claim that the inheritance mechanism, when completely
stripped from its context, is ``legally incoherent'' refers to an
unspecified notion of ``legality'', when it ignores both the
foundational principles of the Islamic Legal System and the nature of
Quranic revelation itself. The Quran was revealed in an event-based,
incremental manner \textarabic{النزول المنجم} (nuzul manjam), whereby
each verse has what classical Muslim scholars term Asbab al-Nuzul
(\textarabic{أسباب النزول} reasons of revelation)--- a context largely
determined by the actions and reactions of the Prophet (pbuh) and his
companions to real-world events. Verse 176 explicitly opens with
\textarabic{يستفتونك} ``yastaftunaka'' (``they ask you''), proving there
was a specific question-context to which the Quran was responding.
Furthermore, ``legality'' in Islam is not an abstract or primitive
notion relying solely on some alleged ``literal meaning'' of isolated
words, phrases or sentences. The Islamic Legal System explicitly
encompasses: (a) the Prophet's statements (pbuh) and actions related to
the verses, (b) valid interpretations derived from Arabic linguistic
rules, (c) the consensus and opinions of the Prophet's companions and
subsequent scholars, and (d) ongoing rational deduction. In the studied
case: The overwhelming majority of Muslim scholar opinions are in favor
of differentiating between the apparently similar cases of Kalala. By
ignoring this established Islamic legal fact, without giving a rational
argument backing his opinion up, and doing this in favor of a completely
invented scenario---claiming the companions recognized a textual
mistake and conspired to ``fix'' it---Powers constructs a textbook
straw-man argument.

Second, Powers' linguistic claims are factually false. The pretense that
there are no syntactic or linguistic differences between the two verses
is easily refuted by the text itself. As noted, verse 176 begins with
\textarabic{يستفتونك} ``yastaftunaka'', indicating a concrete question
to which the Quran is responding. Furthermore, verse 176 uses the
definite form \textarabic{الكلالة} (\textit{Al-Kalalah}), whereas verse
12 uses the indefinite form \textarabic{كلالة} (\textit{Kalalah}). In
Arabic grammar, this shift from definite to indefinite opens up the
distinct possibility that verse 176 may be handling a general or default
case, while verse 12 is handling a more specific, localized instance.
Additionally, Powers ignores a basic linguistic fact: in Arabic, as in
any human language, the word ``sibling'' \textarabic{أخ أخت إخوة}
(akh/ikhwah) possesses not one, but three possible meanings (full
siblings, uterine siblings, or consanguine siblings). There is neither a
linguistic nor a logical justification for the assumption that verses 12
and 176 use exactly the same meaning, especially considering that verse
176 was revealed before verse 12 and under entirely different
circumstances. Strictly speaking, the three meanings of ``siblings''
yield 9 possible interpretations across both verses combined---not just
the single, unified interpretation Powers forces onto the text.

Yet, and most relevant to this work, even if one, for the sake of
argument, were to accept Powers' syntactic premise, his conclusion rests
on a fundamental misunderstanding of formal axiomatic systems and of the
relationship between an axiomatic system and its semantic
interpretations. Ambiguity in the descriptive language from which an
informal axiomatic system is extracted is not, by itself, a logical
defect. On the contrary, semantic flexibility is a natural feature of
interpretation: the same collection of descriptive statements may admit
multiple semantic models, and the task of formalization is precisely to
identify the structures satisfying the intended constraints.

This observation is particularly important in the present case. The
natural-language expression ``siblings'' \textarabic{إخوة} is not
intrinsically restricted to a single biological relation. In classical
Arabic, as well as many other natural languages, the corresponding terms
can refer to full siblings, uterine siblings, or consanguine siblings.
Consequently, there are at least three linguistically available
interpretations of ``siblings'' in each of the two Kalalah passages.
Since nothing in the wording of verses 12 and 176 by itself forces
exactly one of these three meanings in each occurrence, the combined
text admits nine possible pairings of these interpretations. Powers'
argument effectively selects one of these possible semantic assignments
and then treats the contradiction arising under that assignment as a
contradiction in the underlying system itself. From the perspective of
model theory, this does not follow. What Powers' analysis actually
identifies, at most, is an inconsistency arising under a particular
semantic interpretation of an ambiguous natural-language description. It
does not establish that the underlying formalized system is
inconsistent.

The distinction is fundamental: a contradiction in one interpretation is
not equivalent to unsatisfiability of the axioms. To establish
inconsistency of a formal system, one must show that no semantic model
satisfies its axioms. If another admissible interpretation yields a
coherent model, the contradiction belongs to the rejected interpretation
rather than necessarily to the axioms themselves.

Indeed, ambiguity need not be an imperfection that a formal system must
eliminate. It may reside entirely in the descriptive linguistic layer
while the underlying formal structure remains perfectly coherent once an
interpretation is selected. In the present case, the three ordinary
linguistic meanings of ``siblings'', combined across the two verses,
already show that Powers' choice of a single unified biological meaning
is an additional semantic assumption rather than something forced by the
linguistic data. The resulting contradiction therefore cannot be
presented as a contradiction of the formal inheritance system without
first establishing that this particular interpretation is uniquely
compelled by the linguistic evidence.

This distinction is not foreign to the epistemological framework of the
Quran itself. The Quran explicitly recognizes that its descriptive
language does not consist exclusively of expressions having one
immediately determinate interpretation. In Surah Al-Imran (3:7), it
states:

\begin{quote}
\textarabic{هُوَ الَّذِي أَنزَلَ عَلَيْكَ الْكِتَابَ مِنْهُ آيَاتٌ مُّحْكَمَاتٌ هُنَّ أُمُّ الْكِتَابِ وَأُخَرُ مُتَشَابِهَاتٌ ۖ فَأَمَّا الَّذِينَ فِي قُلُوبِهِمْ زَيْغٌ فَيَتَّبِعُونَ مَا تَشَابَهَ مِنْهُ ابْتِغَاءَ الْفِتْنَةِ وَابْتِغَاءَ تَأْوِيلِهِ ۗ وَمَا يَعْلَمُ تَأْوِيلَهُ إِلَّا اللَّهُ ۗ وَالرَّاسِخُونَ فِي الْعِلْمِ يَقُولُونَ آمَنَّا بِهِ كُلٌّ مِّنْ عِندِ رَبِّنَا ۗ وَمَا يَذَّكَّرُ إِلَّا أُولُو الْأَلْبَابِ}
\end{quote}

A translation is:

\begin{quote}
``It is He who has sent down to you, [O Muhammad], the Book; in it are
verses [that are] precise -- they are the foundation of the Book -- and
others unspecific. As for those in whose hearts is deviation [from
truth], they will follow that of it which is unspecific, seeking discord
and seeking an interpretation [suitable to them]. And no one knows its
[true] interpretation except Allah. But those firm in knowledge say, `We
believe in it. All [of it] is from our Lord.' And no one will be
reminded except those of understanding.''
\end{quote}

The epistemological significance of this statement extends beyond any
particular legal or historical question. The verse explicitly
distinguishes between \textarabic{آيات محكمات} (\textit{ayat
muhkamat}), which are described as \textarabic{أم الكتاب} (the
foundation of the Book), and \textarabic{آيات متشابهات} (\textit{ayat
mutashabihat}), whose interpretation is not presented as being equally
immediate or determinate. The Quran therefore does not establish the
epistemological principle that every textual statement must possess a
single unambiguous surface interpretation. Rather, it explicitly
incorporates different degrees of semantic determinacy into its own
account of knowledge.

More importantly, this verse does not treat the existence of such
passages as evidence that the Book is defective. Instead, it identifies
a distinction between the character of the textual statement and the
epistemic conduct of the interpreter. Those described as
\textarabic{الراسخون في العلم} (\textit{al-rasikhuna fi al-'ilm}, those
firmly grounded in knowledge) are explicitly distinguished from those
who pursue the ambiguous portions seeking discord or an interpretation
of their own choosing. The Quranic epistemological lesson is therefore
that ambiguity in a descriptive layer calls for disciplined
interpretation rather than automatic classification as contradiction.

This principle is directly relevant to the Kalalah problem. The fact
that the expression ``siblings'' admits multiple ordinary linguistic
meanings does not constitute evidence that the inheritance axioms are
contradictory. It establishes that the descriptive language admits
multiple semantic assignments. A rigorous analysis must therefore
distinguish the ambiguity of the linguistic representation from the
consistency of the formal structure represented by it. The Quranic
distinction between \textit{muhkamat} and \textit{mutashabihat}
provides an internal epistemological basis for making precisely this
distinction. The Kalalah passages provide a concrete example of this
broader principle. The natural-language description permits multiple
semantic models, while the Islamic jurisprudential tradition supplies
interpretations under which the resulting inheritance rules form
coherent systems.

Consequently, the mere existence of an interpretation under which the
shares appear contradictory does not establish that the Quranic
inheritance axioms are contradictory. It establishes, at most, that one
selected interpretation fails to produce a coherent model.

This distinction also exposes the epistemological boundary in Powers'
argument. By treating an ambiguity of descriptive natural language as
though it were already a contradiction in the formal system described by
that language, the analysis moves from a linguistic observation to a
mathematical conclusion without establishing the necessary semantic
bridge. Model-theoretically, the correct procedure is the reverse:
identify the possible interpretations, formalize the corresponding
semantic structures, and determine whether the axioms are satisfiable
under an admissible interpretation. Under this criterion, ambiguity is a
feature of the descriptive layer, not evidence by itself of
inconsistency in the formal layer.

The point is therefore not that ambiguity automatically validates every
possible interpretation. Rather, it is that ambiguity must first be
recognized as ambiguity. A particular interpretation may subsequently be
rejected on linguistic, legal, mathematical, or empirical grounds; but
the existence of a contradiction within that interpretation cannot
itself establish a contradiction in the underlying system unless the
interpretation is shown to be uniquely compelled by the available
evidence, which Powers cannot show.

Similarly, confronted with dynamic \textit{Awl} scenarios, which shall be
discussed in detail in the next sections, where overlapping shares yield
cumulative fractions exceeding a value of one, some methodologically
restricted critics routinely declare the existence of an elementary
arithmetic error \cite{Burton1990, Crone1987}. In doing so, they mistake
a highly sophisticated system of relational, scaling variables for a
primitive, poorly calculated physical pie. By treating their
non-scientific, qualitative tools as an all-encompassing framework, such
participants fail to see where their model's scope ends and where a
precise mathematical phenomenon begins.

The following sections audit the text's exact syntactic
architecture---specifically the definitive role of
\textit{al-ta'r\={i}f} and the partitive alternation between ``ma''
(\textarabic{ما}) and ``min ma'' (\textarabic{من ما})---to demonstrate
that the Quranic axioms explicitly bypass basic arithmetic in favor of an
advanced algebraic engine. By embedding proportional relational weights
alongside absolute biological fail-safes that eliminate mathematical and
logical contradiction, the text operates flawlessly beyond the reach of
historicist reduction, reversing the burden of explanation onto
participants of the ``Islamic studies'' discipline that has mistaken its
narrow interpretive scope for a universal explanatory tool.

Without excluding the possibility of external historical influences, the
formal analysis presented here demonstrates also that
Al-Khw\={a}rizm\={i} had the essential raw conceptual ingredients of his
Algebra laid out directly before him within the linguistic architecture
of these three Quranic verses \cite{Oaks2014, Rashed1994}.

\subsection{The respective Quranic verses and their interpretation}

The foundation is encapsulated in three pivotal verses of Sura 4,
Al-nisa. A translation (according to: Yousuf Ali) of the relevant
portions is as follows:

\begin{itemize}
    \item \textbf{Verse 11:} \\
    \textarabic{(يُوصِيكُمُ اللَّهُ فِي أَوْلَادِكُمْ لِلذَّكَرِ مِثْلُ حَظِّ الْأُنْثَيَيْنِ فَإِنْ كُنَّ نِسَاءً فَوْقَ اثْنَتَيْنِ فَلَهُنَّ ثُلُثَا مَا تَرَكَ وَإِنْ كَانَتْ وَاحِدَةً فَلَهَا النِّصْفُ وَلِأَبَوَيْهِ لِكُلِّ وَاحِدٍ مِنْهُمَا السُّدُسُ مِمَّا تَرَكَ إِنْ كَانَ لَهُ وَلَدٌ فَإِنْ لَمْ يَكُنْ لَهُ وَلَدٌ وَوَرِثَهُ أَبَوَاهُ فَلِأُمِّهِ الثُّلُثُ فَإِنْ كَانَ لَهُ إِخْوَةٌ فَلِأُمِّهِ السُّدُسُ مِنْ بَعْدِ وَصِيَّةٍ يُوصِي بِهَا أَوْ دَيْنٍ آَبَاؤُكُمْ وَأبناؤكم لَا تَدْرُونَ أَيُّهُمْ أَقْرَبُ لَكُمْ نَفْعًا فَرِيضَةً مِنَ اللَّهِ إِنَّ اللَّهَ كَانَ عَلِيمًا حَكِيمًا)} \\
    ``Allah (thus) directs you as regards your Children's (Inheritance):
    to the male, a portion equal to that of two females: if only
    daughters, two or more, their share is two-thirds of the
    inheritance; if only one, her share is a half. For parents, a sixth
    share of the inheritance to each, if the deceased left children; if
    no children, and the parents are the (only) heirs, the mother has a
    third; if the deceased Left brothers (or sisters) the mother has a
    sixth. (The distribution in all cases ('s) after the payment of
    legacies and debts. Ye know not whether your parents or your
    children are nearest to you in benefit. These are settled portions
    ordained by Allah; and Allah is All-knowing, Al-wise.''

    \item \textbf{Verse 12:} \\
    \textarabic{(وَلَكُمْ نِصْفُ مَا تَرَكَ أزواجكُمْ إِنْ لَمْ يَكُنْ لَهُنَّ وَلَدٌ فَإِنْ كَانَ لَهُنَّ وَلَدٌ فَلَكُمُ الرُّبُعُ مِمَّا تَرَكْنَ مِنْ بَعْدِ وَصِيَّةٍ يُوصِينَ بِهَا أَوْ دَيْنٍ وَلَهُنَّ الرُّبُعُ مِمَّا تَرَكْتُمْ إِنْ لَمْ يَكُنْ لَكُمْ وَلَدٌ فَإِنْ كَانَ لَكُمْ وَلَدٌ فَلَهُنَّ الثُّمُنُ مِمَّا تَرَكْتُمْ مِنْ بَعْدِ وَصِيَّةٍ تُوصُونَ بِهَا أَوْ دَيْنٍ وَإِنْ كَانَ رَجُلٌ يُورَثُ كَلَالَةً أَوِ امْرَأَةٌ وَلَهُ أَخٌ أَوْ أُخْتٌ فَلِكُلِّ وَاحِدٍ مِنْهُمَا السُّدُسُ فَإِنْ كَانُوا أَكْثَرَ مِنْ ذَلِكَ فَهُمْ شُرَكَاءُ فِي الثُّلُثِ مِنْ بَعْدِ وَصِيَّةٍ يُوصَى بِهَا أَوْ دَيْنٍ غَيْرَ مُضَارٍّ وَصِيَّةً مِنَ اللَّهِ وَاللَّهُ عَلِيمٌ حَلِيمٌ)} \\
    ``In what your wives leave, your share is a half, if they leave no
    child; but if they leave a child, ye get a fourth; after payment of
    legacies and debts. In what ye leave, their share is a fourth, if ye
    leave no child; but if ye leave a child, they get an eighth; after
    payment of legacies and debts. If the man or woman whose inheritance
    is in question, has left neither ascendants nor descendants, but has
    left a brother or a sister, each one of the two gets a sixth; but if
    more than two, they share in a third; after payment of legacies and
    debts; so that no loss is caused (to any one). Thus is it ordained
    by Allah; and Allah is All-knowing, Most Forbearing.''

    \item \textbf{Verse 176:} \\
    \textarabic{(يَسْتَفْتُونَكَ قُلِ اللّهُ يُفْتِيكُمْ فِي الْكَلاَلَةِ إِنِ امْرُؤٌ هَلَكَ لَيْسَ لَهُ وَلَدٌ وَلَهُ أُخْتٌ فَلَهَا نِصْفُ مَا تَرَكَ وَهُوَ يَرِثُهَا إِن لَّمْ يَكُن لَّهَا وَلَدٌ فَإِن كَانَتَا اثْنَتَيْنِ فَلَهُمَا الثُّلُثَانِ مِمَّا تَرَكَ وَإِن كَانُواْ إِخْوَةً رِّجَالاً وَنِسَاء فَلِلذَّكَرِ مِثْلُ حَظِّ الأُنثَيَيْنِ يُبَيِّنُ اللّهُ لَكُمْ أَن تَضِلُّواْ وَاللّهُ بِكُلِّ شَيْءٍ عَلِيمٌ)} \\
    ``They ask thee for a legal decision. Say: Allah directs (thus)
    about those who leave no descendants or ascendants as heirs. If it
    is a man that dies, leaving a sister but no child, she shall have
    half the inheritance: If (such a deceased was) a woman, who left no
    child, Her brother takes her inheritance: If there are two sisters,
    they shall have two-thirds of the inheritance (between them): if
    there are brothers and sisters, (they share), the male having twice
    the share of the female. Thus doth Allah make clear to you (His
    law), lest ye err. And Allah hath knowledge of all things.''
\end{itemize}

A rigorous linguistic analysis of these verses reveals a crucial
structural characteristic regarding how proportions are defined. Across
these verses, there are precisely fifteen instances where inheritance
fractions are apportioned, respectively:

\begin{enumerate}
    \item \textarabic{لِلذَّكَرِ مِثْلُ حَظِّ الْأُنْثَيَيْنِ}
    \item \textarabic{فَلَهُنَّ ثُلُثَا مَا تَرَكَ}
    \item \textarabic{فَلَهَا النِّصْفُ}
    \item \textarabic{لِكُلِّ وَاحِدٍ مِنْهُمَا السُّدُسُ مِمَّا تَرَكَ}
    \item \textarabic{فَلِأُمِّهِ الثُّلُثُ}
    \item \textarabic{فَلِأُمِّهِ السُّدُسُ}
    \item \textarabic{نِصْفُ مَا تَرَكَ أزواجكُمْ}
    \item \textarabic{فَلَكُمُ الرُّبُعُ مِمَّا تَرَكْنَ}
    \item \textarabic{وَلَهُنَّ الرُّبُعُ مِمَّا تَرَكْتُمْ}
    \item \textarabic{فَلَهُنَّ الثُّمُنُ مِمَّا تَرَكْتُمْ}
    \item \textarabic{فَلِكُلِّ وَاحِدٍ مِنْهُمَا السُّدُسُ}
    \item \textarabic{فَهُمْ شُرَكَاءُ فِي الثُّلُثِ}
    \item \textarabic{فَلَهَا نِصْفُ مَا تَرَكَ}
    \item \textarabic{فَلَهُمَا الثُّلُثَانِ مِمَّا تَرَكَ}
    \item \textarabic{فَلِلذَّكَرِ مِثْلُ حَظِّ الأُنثَيَيْنِ}
\end{enumerate}

In ten of these fifteen instances, the Quran defines the proportions
directly using the definite article \textarabic{ال} (\textit{al-ta'r\={i}f}),
utilizing formulations such as ``for her is the half''
(\textarabic{لَهَا النِّصْفُ}), ``for you is the fourth''
(\textarabic{وَلَكُمُ الرُّبُعُ}), or ``for his mother is the third''
(\textarabic{فَلِأُمِّهِ الثُّلُثُ}). In only three out of the fifteen
instances does the Quran link the proportion to the total value of the
estate using an additive genitive construction
(\textarabic{الإضافة إلى القيمة الكلية}), such as ``two thirds of what
he left'' (\textarabic{فَوْقَ اثْنَتَيْنِ فَلَهُنَّ ثُلُثَا مَا تَرَكَ}).
The two remaining instances are actually a repetition of the
mathematical assertion applied to offsprings and siblings only: ``to the
male, a portion equal to that of two females''.

This grammatical predominance is highly significant. In Arabic, the use
of the definite article presents the prescribed share as a determinate
and independently identifiable conceptual entity in the discourse. Thus,
in expressions such as \textarabic{لَهَا النِّصْفُ} (``for her is the
half''), \textarabic{فَلِأُمِّهِ الثُّلُثُ} (``for his mother is the
third''), and \textarabic{فَلَكُمُ الرُّبُعُ} (``for you is the
fourth''), the fraction is grammatically presented as an already
identifiable share, without its denotation being syntactically
established by reference to the actual numerical value of the estate.
The share therefore exists at the conceptual level independently of any
particular estate value to which it may subsequently be applied.

This contrasts with constructions in which the fraction is explicitly
linked to the estate through a genitive or relative construction, such
as \textarabic{ثُلُثَا مَا تَرَكَ} (``two thirds of what he left''). In
such cases, \textarabic{ما تَرَكَ} explicitly supplies the reference
domain against which the fraction is evaluated. The distinction is
therefore not merely stylistic: the former construction introduces the
prescribed fraction as a determinate conceptual quantity, while the
latter explicitly relates that quantity to the concrete estate described
by the text.

Accordingly, the predominance of definite, unlinked fractions provides
linguistic evidence for distinguishing the conceptual share from the
numerical value of the estate itself. The prescribed fractions can first
be represented as abstract relational quantities and only subsequently
instantiated within a particular inheritance configuration. This
distinction becomes especially important in configurations involving
overlapping prescribed shares, where the conceptual shares may
collectively exceed unity and therefore require an algebraic
determination of their realization over the actual estate.

\subsection{The meaning of ``ma'' (\textarabic{ما}) and ``min ma'' (\textarabic{من ما})}

Building upon this linguistic framework, a further syntactic distinction
must be made regarding the way the Quran mentions the relation of the
proportions to the estate itself. The Quran systematically alternates
between two distinct linguistic constructions that bear different
linguistic meanings.

The first utilizes the relative pronoun: \textarabic{ما} ``ma''
(what/that which). Linguistically, ``ma'' denotes obscurity and
totality. Mathematically, it binds the prescribed fraction to the
absolute total of the estate.

If we say, for example: \textarabic{``أعطني نصف ما تملك''} (English:
``give me half of what you possess'') we do mean, as in English also,
exactly half of the total amount of the possession.

The second construction introduces the preposition of partitioning:
\textarabic{من} ``min,'' forming: \textarabic{من ما} ``min ma'' (from
what/out of what). In Arabic syntax, the preposition ``min'' here is
partitive (\textit{tab'idiyyah}), linguistically permitting the fraction
to be calculated from a subset or a partitioned portion of the whole,
rather than the absolute total.

When we say: \textarabic{``أعطني النصف من ما تملك''} (English: ``give me
the half from what you possess'') we mean, also in English, that you
need to give me ``the agreed upon half'' (marked by the definite
article) from what you possess, which does not mean, as in the previous
sentence, half of all your possessions.

This subtle grammatical distinction provides the foundational mechanism
for consistently resolving complex inheritance scenarios where multiple
shares overlap.

\subsection{The deduced notion of a math variable}
\label{subsec:variable}

The semantic framework established in the first two sections, when
formalized, inevitably gives rise to the notion of a ``mathematical
variable'', elegantly captured by the pronoun \textarabic{ما} ``ma'',
which linguistically also denotes: obscurity as well as totality.

An interesting fact worth mentioning in this context is, that
Al-Khw\={a}rizm\={i} in his famous Algebra book did not find himself in
need of translating the Quranic assertions to any symbolic Language
utilizing math variables, and just referred to the \textarabic{ما}
``ma''-constructions utilizing the Arabic word: \textarabic{``مال''}
(\textit{belongings}), i.e., he used an Arabic word to stand for what we
now consider to be a math variable, without any loss of generality.

The importance of the use of math variables (or their semantic
equivalents in Arabic) to express Quranic inheritance rules is not seen
in trivial cases where all prescribed proportions add up to the full
percentage of the estate, but when $1/2$, $1/3$, and/or $2/3$ fractions
seem to cause a math contradiction.

For example, if a deceased woman leaves behind a husband and two
sisters, the prescribed shares are $1/2$ and $2/3$ respectively, summing
to $7/6$ (116.6\%).

Historically, Muslim jurists after Al-Khw\={a}rizm\={i} resolved this by
treating the conceptual total percentage as an unknown variable ($X$)
that must scale down proportionally. The equation becomes:
$1/2(X)+2/3(X)=1$, necessitating the use of linear equations.

This amounted to treating \textarabic{ما} ``ma'' and \textarabic{من ما}
``min-ma'' constructions as denoting the same, namely: ``the whole or
less''.

Alternatively, if one applies a strict literal reading (like the one
reported to be the opinion of Ibn-Abbas), distinguishing between
\textarabic{ما} ``ma'' and \textarabic{من ما} ``min ma'', the husband
takes his absolute $1/2$, and the sisters' $2/3$ share is calculated
from a theoretical subset ($p$), requiring an equation like
$2/3(p)=\text{remaining estate}$.

In both methodologies, the text forces the abandonment of basic
arithmetic in favor of algebraic variables to ensure a flawless 100\%
distribution.

\subsection{Consistence of the inheritance axioms}

A potential critique of the literal reading is that a mathematical
contradiction would occur if two shares---such as a $1/2$ share and a
$2/3$ share---both utilizing \textarabic{ما} ``ma'' were to co-occur.

However, the Quranic axioms exhibit a structural consistency safeguarded
by biological and logical mutual exclusivity.

Instances where heirs receive $1/2$ using \textarabic{ما} ``ma'' include
a husband (when there are no children), a sister (when there are no
children), and a single daughter. Conversely, the $2/3$ share using
\textarabic{ما} ``ma'' is prescribed exclusively for two or more
daughters.

A husband only receives $1/2$ in the absence of children, rendering it
biologically impossible for his share to coexist with the $2/3$ share of
two daughters.

Similarly, a sister's $1/2$ share precludes the existence of children,
and the existence of a single daughter logically precludes the existence
of ``two or more'' daughters simultaneously.

Therefore, the axioms of the system are structurally immune to producing
a literal mathematical contradiction, as the text embeds biological
fail-safes to prevent logical errors.

\subsection{A challenge for de-mystification efforts}

Although the aim of the current work is to provide a logical basis for
Al-Khw\={a}rizm\={i}'s Algebra, it is worth mentioning that the
structural and mathematical realities of the Inheritance verses
presented here provide a definitive challenge for attempts of secular
de-mystification of the above Qur'anic verses. Some of those attempts
use the literal reading to claim the existence of a contradiction in the
Quranic Inheritance axioms.

As we saw in the last sections: Whether one applies a unified
jurisprudential reading that necessitates an unknown variable ($X$) for
proportional reduction, or a strict literal reading that differentiates
between \textarabic{ما} ``ma'' and \textarabic{من ما} ``min ma'' to
necessitate a variable ($p$) for a theoretical subset, the text operates
flawlessly. Furthermore, the text pre-emptively safeguards itself
against literal mathematical contradictions through biological
exclusivity.

This moves the burden of explanation of how such a rigorous and
unprecedented legal framework is a possible achievement of someone
living in the seventh century to the side of de-mystification efforts,
knowing that the framework functions perfectly under two divergent
linguistic interpretations---both secretly requiring a mathematical
science (Algebra) that would not be formalized for another two
centuries---while simultaneously embedding logical fail-safes.

The algebraic equations, dual linguistic readings, and proportional
design embedded within these verses demonstrate a level of systemic
vision that conclusively demonstrates unexplained capacities.

\subsection{Conclusion}

The preceding analysis reveals that the Quran employs a profoundly
efficient, informal diction to describe the fundamental axioms of
inheritance. This linguistic architecture is characterized by the
following structural features:

\begin{itemize}
    \item \textbf{Promoting the notion of a mathematical Variable:} It
    encourages the use of mathematical variables bound to structural
    constraints by utilizing linguistic constructs that semantically
    denote totality, obscurity, and partial inclusion---specifically
    through the alternation of \textarabic{ما} ``ma'' and
    \textarabic{من ما} ``min ma''.

    \item \textbf{Defining proportions relative to the total amount of
    shares:} The predominance of the definite article
    (\textit{al-ta'r\={i}f}) establishes the shares as proportional
    percentages of a conceptual whole, rather than fixed arithmetic
    extractions from a physical estate. Direct sentences relating
    proportions to each other, like \textarabic{للذكر مثل حظ الأنثيين}
    (to a male a portion equal to that of two females) provides more
    evidence for that.

    \item \textbf{Semantic Flexibility and Historical Evolution:} It
    opens the door to multiple valid linguistic interpretations (such as
    the biological categorization of ``siblings'' or the strict versus
    unified treatment of grammatical particles), each yielding a
    distinct inheritance-allocation mechanism. Historically, this
    generated three major developmental phases: the strict literal
    system of Ibn Abbas, the unified jurisprudential system of Umar, and
    ultimately, the formalized algebraic system of
    Al-Khw\={a}rizm\={i}.

    \item \textbf{Internal Consistency:} Each of these distinct
    methodologies is internally consistent and can be soundly linked to
    both legal and rational foundations.

    \item \textbf{Logical Immunity:} It achieves this remarkable
    interpretive flexibility while remaining immune to logical
    inconsistency, safeguarded by embedded biological fail-safes.

    \item \textbf{Compactness:} It accomplishes all of this in the most
    compact way possible, encoding a scalable algebraic framework capable
    of representing arbitrarily large finite inheritance configurations
    within merely three verses.
\end{itemize}

In light of this structural sophistication, the prevalent secular and
historicist critiques of the Quranic inheritance system---exemplified by
``Islamic studies'' participants such as David Powers, John Burton, and
Patricia Crone---suffer from a critical epistemological boundary
failure. By attempting to evaluate a sophisticated mathematical
framework using purely qualitative and philological tools, these critics
mistake their own restrictive interpretive models for inherent flaws
within the text. Whether addressing the alleged legal incoherence of the
\textit{Kalalah} scenarios or the supposed arithmetic errors of
\textit{Awl} (oversubscription), their claims collapse under formal
logical scrutiny. As known from formal model theory, demonstrating that
a system yields a contradiction in a single, selected semantic model
does not render the underlying axioms unsatisfiable; it merely
necessitates the discarding of the flawed model.

Whether one applies the unified jurisprudential reading---which requires
a scaling variable ($X$) to resolve the \textit{Awl} overflow---or the
strict literal reading---which requires a theoretical subset variable
($p$)---the text forces the abandonment of basic arithmetic in favor of
linear algebraic equations.

Ultimately, the burden of explanation of this phenomenon is decisively
reversed to become the task of western, secular de-mystification efforts
in ``Islamic studies'': The raw conceptual ingredients of Algebra were
not retroactively imposed upon the text; rather, they were directly laid
out within its linguistic architecture. It was this very architecture
that may have compelled Al-Khw\={a}rizm\={i} to formalize the science of
Algebra, cementing the Quranic inheritance system as a mathematically
robust, structurally flawless framework.

These linguistic features motivate, but do not by themselves entail,
the formal construction that follows. Sections~\ref{sec:rep}--\ref{sec:dec}
supply the formal content: a representation lemma, a canonicity theorem,
and a decision procedure for the inheritance-relevant fragment.


\section{Al-Khw\={a}rizm\={i}'s Inheritance Algebra}
\label{sec:rep}

\subsection{Inheritance configurations and the representation lemma}

\begin{definition}[Inheritance configuration]
An \emph{inheritance configuration} is a tuple $(H,R,W,G)$ where $H$ is
a finite set of heirs drawn from a fixed finite set of heir classes,
$R$ is a finite set of applicability conditions (each a Boolean
combination of biological preconditions such as presence or absence of
children, parents, and siblings of each class), $W$ is a finite set of
\textit{wa\d{s}iyya} (bequest) conditions, and $G\in\mathbb{Q}_{>0}$ is
the gross estate.
\end{definition}

\begin{definition}[Admissible allocation]
An \emph{admissible allocation} for $(H,R,W,G)$ is a map
$a\colon H\to\mathbb{Q}_{\geq 0}$ together with a bequest value
$w\in\mathbb{Q}_{\geq 0}$ such that
(i) the prescribed shares match the Qur'anic fractions under the
applicable $R$-conditions,
(ii) $\sum_{h\in H} a(h)+w=G$ (conservation), and
(iii) $w\leq G/3$ (legal cap).
\end{definition}

\begin{lemma}[Representation]
\label{lem:rep}
Every inheritance configuration is representable as a finite system
\[
\mathcal{S}=\{\,a_j^{\mathsf T}x\leq b_j : 1\leq j\leq m\,\},
\qquad
a_j\in\mathbb{Q}^n,\quad b_j\in\mathbb{Q},
\]
whose rational solutions correspond bijectively to admissible
allocations.
\end{lemma}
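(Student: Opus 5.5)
The plan is to build $\mathcal{S}$ explicitly from the data $(H,R,W,G)$ and then check the bijection on rational points. First fix a coordinate order: one variable $x_h$ for each heir $h\in H$ and one variable $x_w$ for the bequest, so $n=|H|+1$. The applicability conditions in $R$ are Boolean combinations of biological facts, and these facts are \emph{fixed} once $H$ is given. So the first step is to evaluate every condition in $R$ against $H$ and discard those that fail. This preprocessing happens before any linear constraint is written. It is what keeps the system a single conjunction of linear inequalities rather than a disjunction. For each heir the surviving conditions select a unique prescribed rule, either a fixed fraction $f_h\in\mathbb{Q}$ or a residuary/ratio rule such as $x_{\text{male}}=2x_{\text{female}}$. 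The uniqueness is guaranteed by the mutual-exclusivity observations of \S2.4.

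The second step is to translate each requirement into linear constraints with rational coefficients.
\begin{itemize}
\item \textbf{Nonnegativity.} Write $-x_h\le 0$ and $-x_w\le 0$.
\item \textbf{Conservation.} Write $\sum_h x_h + x_w = G$ as the pair of inequalities $\sum_h x_h + x_w \le G$ and $-\sum_h x_h - x_w \le -G$.
\item \textbf{Legal cap.} Write $x_w\le G/3$.
\item \textbf{Prescribed shares, condition (i).} These become equalities, each split into two inequalities.
\end{itemize}
The prescribed-share case depends on the reading adopted in \S\ref{subsec:variable}. For a fixed share $f_h$ the equality is $x_h=f_h\,(G-x_w)$, which is linear in $(x_h,x_w)$ because $G$ is a fixed rational. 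Ratio rules are already linear. In an \textit{`awl} configuration the scaled reading $x_h = f_h (G-x_w)/\sum_{k} f_k$ is still linear, since $\sum_k f_k$ is a rational constant computable from $H$. The residuary clause is then just what conservation leaves over. Every coefficient lies in $\mathbb{Q}$, and there are finitely many constraints because $H$, $R$ and $W$ are finite. Each wa\d{s}iyya condition in $W$ is handled the same way: it is evaluated, and if it is applicable it contributes a linear bound on $x_w$.

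The third step is the bijection. Define $\Phi(a,w)=((a(h))_{h\in H},w)\in\mathbb{Q}^n$. By construction each clause (i)--(iii) of admissibility, together with $a\ge 0$, is equivalent to the block of constraints generated from it. So $\Phi$ maps admissible allocations into the solution set of $\mathcal{S}$, and its inverse $x\mapsto(h\mapsto x_h,\;x_w)$ maps solutions back to admissible allocations. Both maps are coordinate readings and are trivially mutually inverse, so injectivity and surjectivity follow at once.

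I expect the main obstacle to be the first step: making clause (i), ``match the Qur'anic fractions under the applicable $R$-conditions'', precise enough that it is \emph{linear} and not merely piecewise-linear. I would handle it in two ways. First, condition on the configuration, resolving all case distinctions at the meta-level before emitting constraints. Second, fix one interpretation (unified \textit{`awl} scaling, or Ibn Abbas's partitive reading with its extra subset variable $p$). Under either reading the normalising denominators are constants, so no products of variables arise. If a configuration's rules left the choice of case dependent on the unknowns themselves, this would fail. I would argue that it never does, because all triggers in $R$ are statements about the presence of heirs and never about the amounts they receive.
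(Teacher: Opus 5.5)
Your proof is correct at the paper's level of rigor, but it handles the one substantive point differently. The paper's sketch normalizes $G=1$ and turns each applicability condition into a case split, ``generating one linear system per case,'' so the configuration is represented by the \emph{disjunction} of a finite family of systems. Its \textit{`awl} example in \S\ref{sec:dec} also keeps the scaling factor as an unknown $s$.

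You instead note that the triggers in $R$ concern which relatives are present, never the amounts they receive. So every case distinction, including whether \textit{`awl} applies and the value of the constant $\sum_k f_k$, is resolved at the meta-level before any constraint is written. This yields a single conjunctive system. That is what the lemma literally asserts, and it is what the Fourier--Motzkin procedure takes as input. The paper's disjunctive form mirrors exhaustive case generation over a schema, but strictly it describes a union of polyhedra and must be decided disjunct by disjunct.

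You also add the nonnegativity constraints, which the paper's sketch omits although the bijection needs them. And because you do not rescale $G$, your $\Phi$ is a literal coordinate bijection rather than one ``up to scaling.''

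Three minor corrections, none affecting validity:
\begin{enumerate}
\item \S2.4 does not give uniqueness of the selected rule, since it only treats the $1/2$ and $2/3$ shares. You do not need uniqueness, because conjoining all applicable prescriptions is still linear.
\item Some preconditions are data of the configuration rather than of $H$ alone, e.g.\ siblings who reduce the mother's share without inheriting themselves. They are still fixed in advance, so your argument is unaffected.
\item Bequest clauses may tie $x_w$ to heir variables, as in $w=\frac13-\frac14x$ of \S\ref{sec:worked}, not merely bound $x_w$. Auxiliary unknowns such as $p$ force you to extend $\Phi$. This is harmless, because they are fixed by linear equalities.
\end{enumerate}
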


\begin{proof}[Proof sketch]
Normalize $G=1$: since $G\in\mathbb{Q}$, scaling by $G^{-1}$ preserves
rationality and the solution set is the same up to the same scaling.
Each prescribed share is a rational constant
($1/2$, $1/3$, $2/3$, $1/4$, $1/6$, $1/8$) applied to the estate or to
a partitive subset; each such prescription is a linear equality in the
unknowns. Each applicability condition is a finite case distinction,
generating one linear system per case; the union over cases is the
finite family of systems whose disjunction represents the configuration.
Conservation is the equality $\sum_i x_i+w=1$. The \textit{wa\d{s}iyya}
cap is the inequality $w\leq 1/3$. Equalities are replaced by pairs of
inequalities.
\end{proof}

\begin{remark}[Rationality]
Because shares, caps, and normalization are rational, all coefficients
$a_j,b_j$ are rational. This is what licenses Fourier--Motzkin
elimination over $\mathbb{Q}$ rather than over $\mathbb{R}$ in
\S\ref{sec:dec}.
\end{remark}

\begin{remark}[Finitely many cases]
The case space is finite because heir classes and preconditions are
drawn from a fixed finite vocabulary; this is the formal counterpart of
\textit{al-\d{s}abr wa-al-taqs\={i}m} invoked in
\S\ref{sec:fiqh}.
\end{remark}

\subsection{A worked bequest}
\label{sec:worked}

Consider a classical inheritance problem in which a man dies leaving
two sons and wills to a stranger a legacy equal to one third of the
estate minus one fourth of what one son receives. Normalize the gross
estate to
\[
G=1.
\]
Let
\[
x\in\mathbb{Q}
\]
denote the share of one son and
\[
w\in\mathbb{Q}
\]
denote the value of the bequest. The corresponding system is
\begin{align}
2x+w &= 1,
&&\text{(Estate Conservation)},\\
w &= \frac{1}{3}-\frac{1}{4}x,
&&\text{(Bequest Definition)},\\
w &\leq \frac{1}{3},
&&\text{(Legal Upper Bound)}.
\end{align}

The equality constraints may be represented by pairs of inequalities:
\begin{equation}
\begin{aligned}
w &\leq 1-2x,
&
w &\geq 1-2x,\\
w &\leq \frac{1}{3}-\frac14x,
&
w &\geq \frac{1}{3}-\frac14x.
\end{aligned}
\end{equation}

Fourier--Motzkin elimination combines every lower bound of $w$ with
every upper bound of $w$. The two nontrivial projected constraints are
\begin{align}
1-2x &\leq \tfrac13-\tfrac14 x
&&\Longrightarrow&
x &\geq \tfrac{8}{21},
\\
\tfrac13-\tfrac14 x &\leq 1-2x
&&\Longrightarrow&
x &\leq \tfrac{8}{21}.
\end{align}
Together they give $x=\tfrac{8}{21}$ directly, without appealing to the
retained equality. The legal cap $w\leq\tfrac13$ has already been used
in forming the first projection.

The projected system therefore determines $x$ uniquely, and
substitution recovers
\begin{equation}
w
=
\frac13-\frac14\left(\frac{8}{21}\right)
=
\frac5{21}.
\end{equation}
The legal upper bound is satisfied:
\begin{equation}
\frac5{21}\leq\frac13.
\end{equation}

The equality information retained from the original system is
consistent with this value; it is not needed to establish uniqueness
once both projections are computed.

Thus the inheritance problem admits the formal transformation
\begin{equation}
\boxed{
\begin{aligned}
\text{jurisprudential problem}
&\longrightarrow
\text{rational linear constraints}
\\
&\longrightarrow
\text{exact computational solution}.
\end{aligned}}
\end{equation}

The use of Fourier--Motzkin elimination here is a modern formal
representation of the computational structure of the problem; it is
not a claim that Al-Khw\={a}rizm\={i} historically formulated his
method using this modern elimination procedure.


\section{The Fixed-Dimensional Inheritance Structure}

For a fixed inheritance configuration with $d$ claimant coordinates,
consider the one-sorted first-order language
\begin{equation}
\mathcal{L}_d
=
\left\langle
0,+,\leq,\{f_q\}_{q\in\mathbb{Q}},
e_1,\ldots,e_d
\right\rangle.
\end{equation}

Here $0$ is the zero allocation, $+$ is vector addition, $\leq$ is the
coordinatewise order, $f_q$ is multiplication by the rational scalar
$q$, and $e_1,\ldots,e_d$ are distinguished basis vectors corresponding
to the claimant coordinates.

The intended structure is
\begin{equation}
\mathcal{V}_d
=
\left(
\mathbb{Q}^d,
0,+,\leq_{\mathrm{cw}},
\{f_q\}_{q\in\mathbb{Q}},
e_1,\ldots,e_d
\right),
\end{equation}
where
\[
e_i=(0,\ldots,0,1,0,\ldots,0)
\]
and
\begin{equation}
(x_1,\ldots,x_d)
\leq_{\mathrm{cw}}
(y_1,\ldots,y_d)
\iff
\bigwedge_{i=1}^{d}x_i\leq y_i.
\end{equation}

\begin{definition}[Admissible structure]
\label{def:adm}
A structure
$\mathcal{M}=(M,0,+,\leq,\{f_q\}_{q\in\mathbb{Q}},e_1,\ldots,e_d)$
in $\mathcal{L}_d$ is \emph{admissible} if:
\begin{enumerate}
\item[(A1)] $M$ is a $\mathbb{Q}$-vector space under $+$ and
$\{f_q\}_{q\in\mathbb{Q}}$;
\item[(A2)] $e_1,\ldots,e_d$ are linearly independent over
$\mathbb{Q}$;
\item[(A3)] every $x\in M$ is a $\mathbb{Q}$-linear combination of
$e_1,\ldots,e_d$ (spanning);
\item[(A4)] for all $x,y\in M$ with $x=\sum_i q_i e_i$ and
$y=\sum_i r_i e_i$, we have $x\leq y$ if and only if $q_i\leq r_i$ for
every $i$ (order is the coordinatewise order induced by the
distinguished basis).
\end{enumerate}
\end{definition}

The class of admissible structures is exactly the class considered in
the canonicity theorem below.


\section{Uniqueness of the Fixed-Dimensional Structure}
\label{sec:canon}

\begin{theorem}[Canonicity]
\label{thm:canon}
Let $\mathcal{M}$ be an admissible structure in the language
$\mathcal{L}_d$ in the sense of Definition~\ref{def:adm}. Then
\[
\mathcal{M}\cong\mathcal{V}_d.
\]
\end{theorem}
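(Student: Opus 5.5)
The plan is to build the isomorphism directly from the distinguished basis. By (A1)--(A3), every $x\in M$ can be written as $x=\sum_{i=1}^d q_i e_i$ with $q_i\in\mathbb{Q}$. By (A2) this representation is unique: if $\sum_i q_i e_i=\sum_i r_i e_i$, then $\sum_i (q_i-r_i)e_i=0$, and linear independence forces $q_i=r_i$ for all $i$. We may therefore define the coordinate map
\[
\varphi\colon M\to\mathbb{Q}^d,\qquad \varphi\Bigl(\sum_{i=1}^d q_i e_i\Bigr)=(q_1,\ldots,q_d).
\]
The proof then checks that $\varphi$ is a bijection preserving every symbol of $\mathcal{L}_d$.

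First, bijectivity. $\varphi$ is well defined and injective by the uniqueness of coordinates just established. It is surjective because every tuple $(q_1,\ldots,q_d)\in\mathbb{Q}^d$ is the image of $\sum_i q_i e_i\in M$, which lies in $M$ by the vector space structure (A1). Second, the function and constant symbols. Using the $\mathbb{Q}$-vector space axioms, $\sum_i q_ie_i+\sum_i r_ie_i=\sum_i(q_i+r_i)e_i$ and $f_q(\sum_i q_ie_i)=\sum_i (qq_i)e_i$. Hence $\varphi(x+y)=\varphi(x)+\varphi(y)$ and $\varphi(f_q(x))=q\,\varphi(x)$. Moreover $\varphi(0^{\mathcal{M}})=0$, since $0=\sum_i 0\cdot e_i$, and $\varphi(e_i^{\mathcal{M}})$ is the $i$-th standard basis vector, which is $e_i^{\mathcal{V}_d}$.

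Third, the relation symbol $\leq$. This step is immediate from (A4): $x\leq y$ holds in $\mathcal{M}$ iff $q_i\leq r_i$ for every $i$, which is exactly $\varphi(x)\leq_{\mathrm{cw}}\varphi(y)$ by the definition of $\leq_{\mathrm{cw}}$. Preservation in both directions is what an $\mathcal{L}_d$-isomorphism requires for a relation symbol, and (A4) supplies the biconditional.

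No step is a genuine obstacle: canonicity is essentially built into (A3) and (A4). The only point requiring care is uniqueness of coordinates. The map $\varphi$ is not even well defined without it, and this is precisely where (A2) is used. One should also note explicitly that the scalar symbols $f_q$ must be interpreted as the actual $\mathbb{Q}$-scalar multiplication of (A1), so that the computation $f_q(\sum_i q_ie_i)=\sum_i qq_ie_i$ is justified by distributivity and associativity of scalars.
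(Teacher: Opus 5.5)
Your proposal is correct and follows essentially the same approach as the paper. You define the coordinate map using spanning (A3) with uniqueness from linear independence (A2), check bijectivity and preservation of $+$, $f_q$, $0$ and the $e_i$, and read off order preservation in both directions from (A4); your explicit justifications of well-definedness and of $f_q(\sum_i q_ie_i)=\sum_i qq_ie_i$ via (A1) spell out slightly more than the paper does.
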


\begin{proof}
Let $M$ be the domain of $\mathcal{M}$. By the spanning condition, for
every $x\in M$ there exist rational numbers
$q_1,\ldots,q_d$ such that
\[
x=q_1e_1+\cdots+q_de_d.
\]
By linear independence of
$e_1,\ldots,e_d$, this representation is unique.

Define
\[
\phi:M\longrightarrow\mathbb{Q}^d
\]
by
\[
\phi(x)=(q_1,\ldots,q_d),
\]
where $(q_1,\ldots,q_d)$ is the unique coordinate tuple of $x$.

The map $\phi$ is surjective because every
$(q_1,\ldots,q_d)\in\mathbb{Q}^d$ is represented by
\[
q_1e_1+\cdots+q_de_d.
\]
It is injective because two elements with the same coordinate tuple
have the same representation in the distinguished basis.

If
\[
x=\sum_{i=1}^{d}q_ie_i,
\qquad
y=\sum_{i=1}^{d}r_ie_i,
\]
then
\[
x+y=\sum_{i=1}^{d}(q_i+r_i)e_i,
\]
and therefore
\[
\phi(x+y)
=
\phi(x)+\phi(y).
\]
Likewise, for every $q\in\mathbb{Q}$,
\[
\phi(f_q(x))=q\phi(x).
\]

By the coordinatewise-order axiom,
\[
x\leq y
\iff
q_i\leq r_i
\quad\text{for every }i,
\]
which is equivalent to
\[
\phi(x)\leq_{\mathrm{cw}}\phi(y).
\]
Finally,
\[
\phi(e_i)
\]
is the $i$th standard basis vector of $\mathbb{Q}^d$.

Thus $\phi$ is an isomorphism preserving zero, addition, rational
scalar multiplication, the distinguished basis vectors, and order.
Therefore
\[
\mathcal{M}\cong\mathcal{V}_d.
\]
\end{proof}

\begin{corollary}[Uniqueness within the admissible class]
\label{cor:unique}
Within the class of admissible structures, $\mathcal{V}_d$ is the unique
model up to isomorphism. Consequently, any two admissible rational
linear-coordinate formalizations of the same inheritance configuration
are isomorphic, and their allocations coincide under the isomorphism.
\end{corollary}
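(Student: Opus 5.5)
The corollary follows from Theorem~\ref{thm:canon} by composing isomorphisms. I would prove its two sentences separately.

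For the first sentence, there are two things to check. First, $\mathcal{V}_d$ itself belongs to the admissible class. Conditions (A1)--(A4) are immediate:
\begin{itemize}
\item $\mathbb{Q}^d$ is a $\mathbb{Q}$-vector space under coordinatewise operations;
\item the standard vectors $e_i$ are linearly independent and span;
\item $\leq_{\mathrm{cw}}$ is by definition the coordinatewise order in the standard basis.
\end{itemize}
Second, if $\mathcal{M}$ is any admissible structure, then Theorem~\ref{thm:canon} gives $\mathcal{M}\cong\mathcal{V}_d$. So every member of the class lies in the isomorphism class of $\mathcal{V}_d$, and that class is nonempty. This is exactly ``unique model up to isomorphism.''

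For the second sentence, let $\mathcal{M}_1,\mathcal{M}_2$ be two admissible formalizations of the same configuration, with the same $d$ claimant coordinates. Let $\phi_1,\phi_2$ be the coordinate maps built in the proof of Theorem~\ref{thm:canon}. Then $\psi=\phi_2^{-1}\circ\phi_1\colon M_1\to M_2$ is an $\mathcal{L}_d$-isomorphism, since inverses and composites of isomorphisms are isomorphisms. Moreover $\psi(e_i^{\mathcal{M}_1})=e_i^{\mathcal{M}_2}$.

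Next I would make precise what ``allocation'' means inside such a structure. An allocation is an element $x=\sum_i q_ie_i$ whose coordinates $q_i$ are the shares of claimant $i$, subject to the constraints of Lemma~\ref{lem:rep}. Those constraints are finite conjunctions of linear inequalities. Each one is expressible as a quantifier-free $\mathcal{L}_d$-formula using $f_q$, $+$, $\leq$ and the constants $e_i$. Rational right-hand sides $b_j$ are written as $f_{b_j}$ applied to suitable sums of the $e_i$.

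Isomorphisms preserve satisfaction of such formulas. Hence $x$ satisfies $\mathcal{S}$ in $\mathcal{M}_1$ if and only if $\psi(x)$ satisfies $\mathcal{S}$ in $\mathcal{M}_2$. Also $\psi$ fixes the coordinate tuple, because $\phi_2\circ\psi=\phi_1$, so the claimant shares are literally equal.

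The main obstacle is the interpretive step: stating precisely what it means for two admissible structures to formalize the same configuration. I would fix this by requiring that both use the same indexing of the claimant coordinates $e_1,\ldots,e_d$ and the same constraint system $\mathcal{S}$ from Lemma~\ref{lem:rep}. Without that convention, a relabelling of claimants would still give an isomorphism, but one that permutes allocations rather than making them coincide. Once the convention is stated, the remainder is routine.
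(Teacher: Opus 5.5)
Your route is the intended one. The paper gives no separate argument and treats the corollary as immediate from Theorem~\ref{thm:canon}. Your check that $\mathcal{V}_d$ satisfies (A1)--(A4), the composite $\psi=\phi_2^{-1}\circ\phi_1$, and the identity $\phi_2\circ\psi=\phi_1$ are exactly what is needed.

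\textbf{One step is false as stated:} the claim that each constraint of Lemma~\ref{lem:rep} is a quantifier-free $\mathcal{L}_d$-formula in the allocation $x$.
By induction on terms, every $\mathcal{L}_d$-term in the single variable $x$ evaluates in $\mathcal{V}_d$ to $rx+c$ with $r\in\mathbb{Q}$ and $c\in\mathbb{Q}^d$. So every atomic formula reduces to $sx\leq_{\mathrm{cw}}v$ or $sx=v$, that is, to ``$sq_i\leq v_i$ (resp.\ $sq_i=v_i$) for every $i$'' with one common scalar $s$.
Boolean combinations of such conditions define finite unions of axis-parallel cells. Hence for $d\geq 2$ no quantifier-free formula expresses a constraint that genuinely mixes coordinates, such as conservation $q_1+\cdots+q_d=1$ or the \textit{`awl} relation $2q_1=3q_2$.
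Such a set is not even first-order definable. $\mathcal{V}_d$ is the direct product of $d$ one-dimensional ordered $\mathbb{Q}$-vector spaces with constants, so by Feferman--Vaught and o-minimality of the factors its definable sets are finite unions of boxes.
Your translation, writing $b_j$ as $f_{b_j}$ of a sum of the $e_i$, produces a coordinatewise bound, not the scalar inequality $a_j^{\mathsf T}x\leq b_j$.

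The step is dispensable, so the proof survives once you delete it. Allocations are specified by a condition on coordinate tuples: $x$ is an allocation of $\mathcal{M}_k$ when $\phi_k(x)$ satisfies $\mathcal{S}$.
Since $\phi_2(\psi(x))=\phi_1(x)$, $x$ is an allocation of $\mathcal{M}_1$ if and only if $\psi(x)$ is one of $\mathcal{M}_2$, with literally the same shares. No preservation-of-formulas argument is required.
Let your sentence about $\phi_2\circ\psi=\phi_1$ carry this part. Keep your convention that both formalizations share the indexing of claimants and the system $\mathcal{S}$; that is the right way to make ``the same inheritance configuration'' precise.
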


\begin{remark}[What canonicity does and does not say]
Canonicity is a property \emph{within} the admissible class. It does not
assert that non-linear or non-coordinatewise formalizations are
impossible, nor that linear algebra is the only methodologically
legitimate tool. It asserts that once the inheritance structure is
represented by rational linear coordinates, that representation is
unique up to isomorphism. This is the precise sense in which the
algebraic representation is canonical, and it is the sense required by
the argument of \S2 against treating a single semantic model as
definitive.
\end{remark}

\begin{remark}[Relation to \S2.3]
The observation in \S2.3 that \textit{`awl} resolves $\sum s_i X=1$ is
consistent with this theorem: \textit{`awl} is a computation in the
same rational linear-coordinate class, expressed in different notation.
Canonicity therefore explains why the pre-algebraic and algebraic
procedures agree on allocations---they are two presentations of the same
admissible structure.
\end{remark}


\section{The First-Order Theory of the Fixed-Dimensional Model}

The structure $\mathcal{V}_d$ defined in the preceding section is a
first-order structure for the language
\[
\mathcal{L}_d
=
\left\langle
0,+,\leq,\{f_q\}_{q\in\mathbb{Q}},
e_1,\ldots,e_d
\right\rangle.
\]
The structure itself is not a first-order sentence. To associate a
first-order theory with $\mathcal{V}_d$, we consider all first-order
sentences that can be formed in $\mathcal{L}_d$ and retain precisely
those that are true in $\mathcal{V}_d$.

Let
\[
\operatorname{Sent}(\mathcal{L}_d)
\]
denote the set of all first-order sentences in the language
$\mathcal{L}_d$. The complete first-order theory of the intended
inheritance structure is defined by
\begin{equation}
T_d
=
\operatorname{Th}_{\mathcal{L}_d}(\mathcal{V}_d)
=
\left\{
\varphi\in\operatorname{Sent}(\mathcal{L}_d)
:
\mathcal{V}_d\models\varphi
\right\}.
\end{equation}

Thus the passage from the intended structure to its first-order theory
is the semantic construction
\begin{equation}
\boxed{
\mathcal{V}_d
\quad\longrightarrow\quad
\operatorname{Sent}(\mathcal{L}_d)
\quad\longrightarrow\quad
T_d
=
\left\{
\varphi:
\mathcal{V}_d\models\varphi
\right\}.
}
\end{equation}

Equivalently, for every
$\varphi\in\operatorname{Sent}(\mathcal{L}_d)$,
\begin{equation}
\mathcal{V}_d\models\varphi
\quad\Longleftrightarrow\quad
\varphi\in T_d.
\end{equation}

Hence $T_d$ contains exactly the first-order statements in
$\mathcal{L}_d$ that hold in the intended fixed-dimensional
inheritance structure.

\subsubsection{An Explicit First-Order Axiom System}

Although $T_d$ is defined semantically, the principal algebraic
properties of the intended structure can be stated explicitly as
first-order axioms in the same one-sorted language.

Let $A_d$ denote the first-order axiom system consisting of the
following sentences.

First, the additive structure is an abelian group:
\begin{align}
\forall x\,\forall y\,\forall z\;&
((x+y)+z=x+(y+z)),
\tag{$A_1$}\\
\forall x\,\forall y\;&
(x+y=y+x),
\tag{$A_2$}\\
\forall x\;&
(x+0=x),
\tag{$A_3$}\\
\forall x\;&
\exists y\;(x+y=0).
\tag{$A_4$}
\end{align}

For every fixed $q,r\in\mathbb{Q}$, the scalar-operation symbols
$f_q$ satisfy
\begin{align}
\forall x\;& f_1(x)=x,
\tag{$A_5$}\\
\forall x\;& f_0(x)=0,
\tag{$A_6$}\\
\forall x\;& f_q(f_r(x))=f_{qr}(x),
\tag{$A_7$}\\
\forall x\;& f_{q+r}(x)=f_q(x)+f_r(x),
\tag{$A_8$}\\
\forall x\,\forall y\;&
f_q(x+y)=f_q(x)+f_q(y).
\tag{$A_9$}
\end{align}

Here $q$ and $r$ are not variables of the object language. They range
over the externally fixed coefficient field $\mathbb{Q}$, while each
corresponding operation symbol $f_q$ belongs to the language.

The order relation satisfies the first-order axioms of a partial order:
\begin{align}
\forall x\;& x\leq x,
\tag{$A_{10}$}\\
\forall x\,\forall y\;&
\bigl((x\leq y)\land(y\leq x)\bigr)\rightarrow x=y,
\tag{$A_{11}$}\\
\forall x\,\forall y\,\forall z\;&
\bigl((x\leq y)\land(y\leq z)\bigr)\rightarrow x\leq z.
\tag{$A_{12}$}
\end{align}

The order is compatible with addition:
\begin{equation}
\forall x\,\forall y\,\forall z\;
\bigl(x\leq y\rightarrow x+z\leq y+z\bigr).
\tag{$A_{13}$}
\end{equation}

For every positive rational $q>0$, scalar multiplication preserves
the order:
\begin{equation}
\forall x\,\forall y\;
\bigl(x\leq y\rightarrow f_q(x)\leq f_q(y)\bigr).
\tag{$A_{14}$}
\end{equation}

The distinguished vectors are required to be linearly independent
over the fixed rational coefficient field. Since every rational scalar
has a corresponding function symbol, this condition is represented by
the family of first-order sentences
For every nonzero tuple
\[
(q_1,\ldots,q_d)\in\mathbb{Q}^d
\setminus\{(0,\ldots,0)\},
\]
the following first-order sentence belongs to $A_d$:
\begin{equation}
\neg
\left(
f_{q_1}(e_1)+\cdots+f_{q_d}(e_d)=0
\right).
\tag{$A_{15}$}
\end{equation}

In $(A_{15})$, the rational numbers $q_1,\ldots,q_d$ are
metamathematical parameters selecting a particular sentence. Thus
$(A_{15})$ denotes the family of first-order sentences obtained as
$(q_1,\ldots,q_d)$ ranges over $\mathbb{Q}^d$.

The intended structure satisfies these axioms:
\begin{equation}
\mathcal{V}_d\models A_d.
\end{equation}
Consequently,
\begin{equation}
A_d\subseteq T_d.
\end{equation}

The axiom system $A_d$ therefore gives an explicit first-order
description of the principal algebraic and order-theoretic properties
of the intended inheritance structure that are expressible in the
current one-sorted language.

\subsubsection{The Expressibility Limitation of the One-Sorted Language}

There is an important distinction between the explicit axiom system
$A_d$ and the complete theory $T_d$. In the present one-sorted
language, the rational numbers are fixed externally through the family
of function symbols $\{f_q\}_{q\in\mathbb{Q}}$. They do not constitute a
separate sort over which first-order quantification is available.

Consequently, the finite-dimensional spanning condition
\begin{equation}
\forall x\;
\exists q_1,\ldots,q_d\in\mathbb{Q}
\quad
x=
f_{q_1}(e_1)+\cdots+f_{q_d}(e_d)
\end{equation}
is not a single first-order formula of $\mathcal{L}_d$. The displayed
quantifiers range over the externally fixed field $\mathbb{Q}$ rather
than over the one-sorted domain of the structure.

Expressing this condition directly in the present language would
require the infinitary expression
\begin{equation}
\forall x\;
\bigvee_{(q_1,\ldots,q_d)\in\mathbb{Q}^d}
\left(
x=
f_{q_1}(e_1)+\cdots+f_{q_d}(e_d)
\right),
\end{equation}
which is not a first-order formula.

The same observation applies to the full coordinatewise-order
characterization if it is expressed as the existence of nonnegative
rational coordinates for arbitrary elements. Such a condition is part
of the specified intended structure $\mathcal{V}_d$ rather than an
additional first-order axiom of $A_d$ in the present one-sorted
language.

Accordingly, $A_d$ is not claimed to be a categorical first-order
axiomatization of $\mathcal{V}_d$. In particular, we do not claim
\begin{equation}
\operatorname{Mod}(A_d)
=
\{\mathcal{V}_d\}
\quad\text{up to isomorphism}.
\end{equation}

The same distinction applies to $T_d$. Completeness of a first-order
theory and categorical characterization of a structure are distinct
properties. The complete theory $T_d$ contains every first-order
sentence true in $\mathcal{V}_d$, but completeness alone does not imply
that every model of $T_d$ is isomorphic to $\mathcal{V}_d$.

This does not conflict with the canonicity theorem established in
\S\ref{sec:canon}. That theorem concerns the class of structures
satisfying the externally specified fixed-dimensional coordinate
conditions. Within that class, the coordinate map establishes
isomorphism with $\mathbb{Q}^d$. The present first-order construction
addresses a different question: which first-order sentences are true
in the intended structure and which of its expressible properties can
be explicitly axiomatized in the chosen language.

In summary: $A_d$ axiomatizes the expressible algebraic properties of
$\mathcal{V}_d$ in the one-sorted language; it is not categorical, and
$\operatorname{Mod}(A_d)$ properly contains the admissible class. The
canonicity result of \S\ref{sec:canon} concerns the admissible class,
defined externally; the first-order theory $T_d$ is a separate object,
complete by construction but not categorical in $\mathcal{L}_d$.

\subsubsection{Completeness of the First-Order Theory}

\begin{proposition}[Completeness by construction]
\label{prop:comp}
The theory $T_d=\operatorname{Th}_{\mathcal{L}_d}(\mathcal{V}_d)$ is
complete.
\end{proposition}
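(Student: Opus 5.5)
The completeness of $T_d$ follows from the definition of $T_d$ as the theory of a single structure; no structural property of $\mathcal{V}_d$ is needed. I will fix the standard notion: a set of $\mathcal{L}_d$-sentences $T$ is \emph{complete} if it is consistent and, for every $\varphi\in\operatorname{Sent}(\mathcal{L}_d)$, either $\varphi\in T$ or $\neg\varphi\in T$. Since $T_d$ is deductively closed, membership and derivability from $T_d$ coincide. Consistency and negation-completeness can therefore be checked separately, using only the semantic definition of $T_d$ and Tarski's truth definition in $\mathcal{V}_d$.

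\textbf{Step 1 (consistency).} $\mathcal{V}_d$ is a genuine $\mathcal{L}_d$-structure with nonempty domain $\mathbb{Q}^d$. Every symbol is interpreted: $0$, coordinatewise addition, $\leq_{\mathrm{cw}}$, $f_q$ as scalar multiplication by $q$, and $e_i$ as the standard basis vectors. By definition $\mathcal{V}_d\models T_d$. By soundness, a theory with a model cannot derive both $\psi$ and $\neg\psi$, so $T_d$ is consistent. Equivalently, no sentence and its negation both hold in $\mathcal{V}_d$.

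\textbf{Step 2 (negation-completeness).} Let $\varphi$ be any $\mathcal{L}_d$-sentence. Because $\varphi$ has no free variables, its truth value in $\mathcal{V}_d$ is determined, and by the clause for negation exactly one of $\mathcal{V}_d\models\varphi$ and $\mathcal{V}_d\models\neg\varphi$ holds. In the first case $\varphi\in T_d$; in the second $\neg\varphi\in T_d$. The same argument gives deductive closure: if $T_d\vdash\psi$, soundness applied to the model $\mathcal{V}_d$ yields $\mathcal{V}_d\models\psi$, hence $\psi\in T_d$.

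\textbf{Main obstacle.} The only real care needed is conceptual: keeping this result separate from two stronger properties that do not follow from it. First, the uncountable language $\mathcal{L}_d$ does not matter here, because bivalence holds sentence by sentence. Second, as the preceding subsection stresses, completeness does \emph{not} give categoricity of $T_d$ or of $A_d$, and it does \emph{not} give decidability of $T_d$. I would close by noting that nothing about $\mathbb{Q}^d$ was used beyond its being one fixed structure. Any stronger claim, such as completeness of the explicit system $A_d$ via quantifier elimination for ordered $\mathbb{Q}$-vector spaces, would need a separate argument and is not asserted by Proposition~\ref{prop:comp}.
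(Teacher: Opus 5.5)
Your argument is correct and is essentially the paper's own: bivalence of each $\mathcal{L}_d$-sentence in the single structure $\mathcal{V}_d$ puts $\varphi$ or $\neg\varphi$ into $T_d$, and your consistency and deductive-closure checks via soundness and $\mathcal{V}_d\models T_d$ make explicit what the paper's one-line proof leaves implicit. Two asides in your closing paragraph are wrong, though neither affects the proof: $\mathcal{L}_d$ is actually countable, since the family $\{f_q\}_{q\in\mathbb{Q}}$ is countable; and completeness of $A_d$ is not merely unasserted but false, since $\mathbb{Q}^d$ with $\leq$ reinterpreted as equality is a model of $A_d$ in which $0\leq e_1$ fails, whereas $\mathcal{V}_d\models 0\leq e_1$.
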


\begin{proof}
For any $\varphi\in\operatorname{Sent}(\mathcal{L}_d)$, either
$\mathcal{V}_d\models\varphi$ or $\mathcal{V}_d\models\neg\varphi$. By
definition of $T_d$, whichever holds is in $T_d$. Hence for every
$\varphi$, $\varphi\in T_d$ or $\neg\varphi\in T_d$.
\end{proof}

\begin{remark}
This is a property of every theory of the form
$\operatorname{Th}(M)$; it is recorded here not as a novel result but as
a \emph{design certificate}: the formalization leaves no first-order
question about the inheritance structure undecided. It is a statement
about the construction, and is logically independent of both the
canonicity theorem of \S\ref{sec:canon} and the decidability result of
\S\ref{sec:dec}.
\end{remark}

\subsubsection{Structure of the Formal Argument}

The preceding results can be summarized by the following hierarchy:
\begin{equation}
\boxed{
\begin{array}{ccc}
& \mathcal{V}_d=\mathbb{Q}^d & \\[3pt]
\swarrow & & \searrow \\[3pt]
\begin{array}{c}
A_d\subseteq T_d=\operatorname{Th}(\mathcal{V}_d)\\
\text{(complete by construction)}
\end{array}
& &
\begin{array}{c}
\text{finite rational}\\
\text{constraint systems}\\
\downarrow\ \text{Fourier--Motzkin}\\
\text{decidable (existential)}
\end{array}
\end{array}}
\end{equation}

The three principal conclusions are therefore logically distinct:
\begin{align}
&\text{structural canonicity of admissible fixed-dimensional models},
\notag\\
&\text{completeness of }T_d\text{ by construction},
\notag\\
&\text{algorithmic decidability of the inheritance-relevant fragment}.
\notag
\end{align}

The following section develops the third property explicitly.


\section{Algorithmic Decidability of Inheritance Constraint Systems}
\label{sec:dec}

\subsection{Scope of the decidability claim}

The result established in this section is the decidability of the
\emph{inheritance-relevant fragment} of the theory of $\mathcal{V}_d$:
the existential theory over finite quantifier-free systems of rational
linear constraints. This is precisely the fragment generated by the
representation lemma of \S\ref{sec:rep}. Full decidability of $T_d$ is
not claimed and is not needed for the computational-Figh objective:
every question arising from an inheritance configuration is
quantifier-free and linear.

The model-theoretic completeness established above is a semantic
property of the theory $T_d$. The computational problem addressed in
this section is different. Given a concrete inheritance configuration,
the legal conditions produce a finite system of rational linear
equalities and inequalities. The objective is to provide a terminating
procedure that determines whether the system is consistent and, when
the constraints determine a unique allocation, computes that
allocation exactly.

\subsubsection{Input Representation}

Let an inheritance configuration contain $n$ rational unknowns
\[
x_1,\ldots,x_n.
\]
After normalization of the estate and translation of the legal
conditions into algebraic form, the configuration is represented by a
finite system
\begin{equation}
\mathcal{S}
=
\left\{
a_j^{\mathsf T}x\leq b_j
:
1\leq j\leq m
\right\},
\qquad
x=(x_1,\ldots,x_n)\in\mathbb{Q}^n,
\end{equation}
where
\[
a_j\in\mathbb{Q}^n,
\qquad
b_j\in\mathbb{Q}.
\]

Every equality
\[
a^{\mathsf T}x=b
\]
is replaced by the two inequalities
\[
a^{\mathsf T}x\leq b
\qquad\text{and}\qquad
-a^{\mathsf T}x\leq-b.
\]

Consequently, by Lemma~\ref{lem:rep}, every finite inheritance
configuration can be reduced to a finite system of rational linear
inequalities.

The computational problem is therefore:

\begin{equation}
\boxed{
\text{Given }\mathcal{S},
\text{ determine whether }
\exists x\in\mathbb{Q}^n
\text{ satisfying every constraint in }\mathcal{S}.
}
\end{equation}

\subsubsection{Fourier--Motzkin Elimination}

Fourier--Motzkin elimination removes one variable at a time while
preserving satisfiability over the rational numbers.

Suppose the current system contains the variable $x_n$. Every
inequality can be written in the form
\begin{equation}
a_i x_n+c_i(x_1,\ldots,x_{n-1})\leq b_i.
\end{equation}

Partition the inequalities into three classes according to the
coefficient $a_i$:
\begin{align}
\mathcal{U}_n
&=
\{i:a_i>0\},
\\
\mathcal{L}_n
&=
\{i:a_i<0\},
\\
\mathcal{Z}_n
&=
\{i:a_i=0\}.
\end{align}

For $i\in\mathcal{U}_n$, division by the positive rational $a_i$
gives an upper bound
\begin{equation}
x_n
\leq
U_i(x_1,\ldots,x_{n-1}).
\end{equation}

For $j\in\mathcal{L}_n$, division by the negative rational $a_j$
reverses the inequality and gives a lower bound
\begin{equation}
x_n
\geq
L_j(x_1,\ldots,x_{n-1}).
\end{equation}

The variable $x_n$ can exist exactly when every lower bound is
compatible with every upper bound. Therefore Fourier--Motzkin replaces
all inequalities involving $x_n$ by
\begin{equation}
L_j(x_1,\ldots,x_{n-1})
\leq
U_i(x_1,\ldots,x_{n-1}),
\qquad
j\in\mathcal{L}_n,\quad i\in\mathcal{U}_n,
\end{equation}
while retaining all inequalities in $\mathcal{Z}_n$.

The resulting system, denoted
\[
\operatorname{FM}_{x_n}(\mathcal{S}),
\]
contains only the variables
\[
x_1,\ldots,x_{n-1}.
\]

The fundamental elimination property is
\begin{equation}
\mathcal{S}
\text{ is satisfiable}
\quad\Longleftrightarrow\quad
\operatorname{FM}_{x_n}(\mathcal{S})
\text{ is satisfiable}.
\end{equation}

Thus elimination of one variable preserves satisfiability.

The equivalence holds over $\mathbb{Q}$, not merely over
$\mathbb{R}$, because the bounds are rational affine functions of the
remaining variables: at any rational point of the projected system, the
maximum lower bound and minimum upper bound are rational, and any
rational value between them (e.g.\ the midpoint) extends to a rational
solution of the pre-elimination system. This rationality argument is
what licenses the exact rational reconstruction of \S\ref{sec:dec}.

\subsubsection{The Complete Decision Procedure}

Starting with
\[
\mathcal{S}_n=\mathcal{S},
\]
apply Fourier--Motzkin elimination successively:
\begin{equation}
\mathcal{S}_n
\longrightarrow
\mathcal{S}_{n-1}
\longrightarrow
\cdots
\longrightarrow
\mathcal{S}_1
\longrightarrow
\mathcal{S}_0,
\end{equation}
where $\mathcal{S}_k$ contains only the variables
$x_1,\ldots,x_k$.

At the final stage $\mathcal{S}_0$ contains only rational inequalities
of the form
\[
c\leq 0,
\qquad
c\in\mathbb{Q}.
\]

If the final system contains a contradiction such as
\begin{equation}
c>0
\end{equation}
in a constraint equivalent to
\[
c\leq0,
\]
then the original inheritance system is inconsistent.

If no contradiction occurs, the original system is satisfiable.

Thus the decision criterion is
\begin{equation}
\boxed{
\mathcal{S}\text{ is satisfiable}
\quad\Longleftrightarrow\quad
\mathcal{S}_0\text{ contains no contradictory rational inequality}.
}
\end{equation}

Because each elimination step removes one variable and the original
system contains only finitely many variables and inequalities, the
procedure terminates after finitely many elimination stages.

\subsubsection{Correctness of the Decision Procedure}

\begin{theorem}
For every finite system $\mathcal{S}$ of linear equalities and
inequalities over $\mathbb{Q}$, the Fourier--Motzkin procedure
terminates and correctly determines whether $\mathcal{S}$ has a
rational solution.
\end{theorem}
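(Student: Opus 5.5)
The plan is to split the statement into three claims: termination, soundness of a reported contradiction, and completeness of a reported satisfiability. The key lemma is the one-step elimination property, already displayed above: $\mathcal{S}$ is satisfiable over $\mathbb{Q}$ if and only if $\operatorname{FM}_{x_n}(\mathcal{S})$ is. Once that lemma is proved in full, the theorem follows by induction on $n$. First, I reduce equalities to pairs of inequalities as in the input representation, so that $\mathcal{S}$ is a finite set of constraints $a_j^{\mathsf T}x\leq b_j$ with rational data.

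\textbf{Termination.} Each step removes one variable. From a finite system with $|\mathcal{U}_n|+|\mathcal{L}_n|+|\mathcal{Z}_n|=m$ constraints, it produces at most $|\mathcal{L}_n|\cdot|\mathcal{U}_n|+|\mathcal{Z}_n|\leq m^2/4+m$ constraints. All new coefficients are obtained from old ones by rational division, subtraction and addition, so they remain in $\mathbb{Q}$. After $n$ steps one reaches $\mathcal{S}_0$, a finite list of variable-free statements $c\leq 0$ with $c\in\mathbb{Q}$. Checking these is a finite rational comparison. This part is routine; the only point to record is that the blow-up is doubly exponential but finite.

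\textbf{One-step equivalence over $\mathbb{Q}$.} This is the step I expect to be the main obstacle, because the rationality claim is where care is needed.

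For $(\Rightarrow)$: if $x\in\mathbb{Q}^n$ satisfies $\mathcal{S}$, then $L_j(x')\leq x_n\leq U_i(x')$ for all $j\in\mathcal{L}_n$ and $i\in\mathcal{U}_n$, where $x'=(x_1,\ldots,x_{n-1})$. Hence $L_j(x')\leq U_i(x')$ for all such pairs, and the $\mathcal{Z}_n$-constraints do not involve $x_n$. So $x'$ satisfies $\operatorname{FM}_{x_n}(\mathcal{S})$.

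For $(\Leftarrow)$: given rational $x'$ satisfying the projected system, I split into four cases according to whether $\mathcal{L}_n$ and $\mathcal{U}_n$ are empty.
\begin{itemize}
\item If both are nonempty, set $\ell=\max_j L_j(x')$ and $u=\min_i U_i(x')$. Each $L_j,U_i$ is an affine function with rational coefficients evaluated at a rational point, so $\ell,u\in\mathbb{Q}$. The projected constraints give $\ell\leq u$, so I take $x_n=(\ell+u)/2\in\mathbb{Q}$ (or simply $x_n=\ell$).
\item If only $\mathcal{L}_n$ is nonempty, I take $x_n=\ell$.
\item If only $\mathcal{U}_n$ is nonempty, I take $x_n=u$.
\item If both are empty, I take $x_n=0$.
\end{itemize}
In every case the $\mathcal{Z}_n$ constraints hold by hypothesis, and all constraints of $\mathcal{S}$ are then satisfied by $(x',x_n)\in\mathbb{Q}^n$. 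Strictness never enters because all constraints are non-strict. This is worth noting explicitly, since strict inequalities would require tracking strictness through the combinations.

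\textbf{Assembling the result.} By induction using the lemma, $\mathcal{S}$ is satisfiable if and only if $\mathcal{S}_0$ is satisfiable. A variable-free system $\mathcal{S}_0$ is satisfiable (by the empty assignment) exactly when every constraint $c\leq 0$ in it is true, that is, when it contains no contradictory inequality with $c>0$. This yields the boxed decision criterion and hence correctness.

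Finally, as a by-product, the backward direction is constructive. Running the case analysis from $x_1$ up to $x_n$ produces an explicit rational solution whenever one exists, and when the bounds force $\ell=u$ at each stage, that solution is the unique allocation, as in the worked bequest of \S\ref{sec:worked}.
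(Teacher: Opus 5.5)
Your proof is correct and follows the same route as the paper. You replace equalities by pairs of inequalities, show that each Fourier--Motzkin step yields exactly the rational projection of the solution set (so satisfiability is preserved), chain this down to a variable-free system decided by rational comparison, and get termination because each stage removes one variable; you also make explicit two points the paper leaves to its surrounding text, namely the rationality of the chosen value of $x_n$ and the cases where $\mathcal{L}_n$ or $\mathcal{U}_n$ is empty.
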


\begin{proof}
Equalities are first replaced by pairs of inequalities, so it is
sufficient to consider systems of linear inequalities.

At each elimination step, the current variable is partitioned into
upper-bound, lower-bound, and zero-coefficient constraints. A value of
the eliminated variable can satisfy the original constraints exactly
when every lower bound is at most every upper bound. The pairwise
combination of lower and upper bounds therefore produces a system
satisfied by exactly the projections of the original solution set onto
the remaining variables.

Hence
\[
\mathcal{S}_k
\text{ is satisfiable}
\quad\Longleftrightarrow\quad
\mathcal{S}_{k-1}
\text{ is satisfiable}
\]
at every elimination step.

After all $n$ variables have been eliminated, the resulting system
contains only rational constant inequalities. Such a system is
decidable by direct comparison of rational numbers. If a contradiction
occurs, the original system has no solution. If no contradiction
occurs, the successive projection equivalences imply that the original
system has a solution.

Since exactly one variable is eliminated at each stage and the initial
system is finite, the procedure terminates after at most $n$
elimination stages.
\end{proof}

\subsubsection{Reconstruction of a Solution}

The elimination procedure does not merely decide consistency. When a
solution is required, the eliminated variables can be reconstructed
in reverse order.

Suppose the final system $\mathcal{S}_0$ is satisfiable. Choose a
solution of the remaining system and return to the preceding
elimination stage.

At that stage, the eliminated variable $x_k$ is constrained by the
finite collection of rational lower and upper bounds that remain after
the previously eliminated variables have been fixed. If both lower and
upper bounds are present, the Fourier--Motzkin constraints guarantee
that every lower bound does not exceed every upper bound. If one side
is unbounded, any rational value satisfying the bounds that are
present may be chosen. Thus a rational value of $x_k$ can always be
selected whenever the projected system is satisfiable.

The procedure then substitutes the selected value of $x_k$ into the
constraints from the preceding elimination stage and continues
backwards. Repeating this process reconstructs a rational solution
\[
(x_1,\ldots,x_n)
\]
of the original system.

Therefore the computational procedure has two modes of output:
\begin{equation}
\begin{array}{ll}
\text{inconsistent system}
&
\longrightarrow
\text{no admissible allocation},
\\[1mm]
\text{consistent system}
&
\longrightarrow
\text{an exact rational allocation}.
\end{array}
\end{equation}

When the legal constraints determine a unique allocation, the
reconstruction necessarily yields the same rational value for every
solution.

\subsubsection{Determining Uniqueness}

After the satisfiability of the system has been established, the
procedure can determine whether the resulting allocation is unique.
Let
\[
\mathcal{S}
\]
be the original finite system of rational linear constraints, and let
\[
P
=
\left\{
x\in\mathbb{Q}^{n}
:
\mathcal{S}(x)
\right\}
\]
denote its rational feasible set.

For each coordinate $x_i$, the procedure computes the projection of
the feasible set onto the $x_i$-coordinate by applying
Fourier--Motzkin elimination to all variables other than $x_i$.
The resulting one-variable system determines the feasible lower and
upper bounds
\[
\ell_i=\inf_{x\in P}x_i
\qquad\text{and}\qquad
u_i=\sup_{x\in P}x_i.
\]

If $\ell_i=-\infty$ or $u_i=+\infty$, the coordinate is unbounded and
the allocation is not unique. If both bounds are finite, the feasible
values of $x_i$ form the closed interval $[\ell_i,u_i]$ (closed because
rational linear constraints define a closed set in the order topology,
and the bounds are attained at rational points by the rationality
argument above). The allocation is uniquely determined if and only if
every coordinate has a singleton feasible interval. Equivalently,
\begin{equation}
\ell_i=u_i
\qquad
\text{for every }i=1,\ldots,n.
\end{equation}

In that case, the unique allocation is
\[
x^*=(\ell_1,\ldots,\ell_n)
    =(u_1,\ldots,u_n).
\]

If, on the other hand, there exists some coordinate $x_i$ for which
\[
\ell_i<u_i,
\]
then at least two distinct feasible allocations exist, since that
coordinate can take different feasible values. Hence the original
system does not determine a unique allocation.

Accordingly, the complete decision procedure distinguishes the
following cases:
\begin{enumerate}
    \item if $\mathcal{S}$ is unsatisfiable, the inheritance
    configuration has no admissible allocation;

    \item if $\mathcal{S}$ is satisfiable and
    \[
    \ell_i=u_i
    \quad\text{for every }i,
    \]
    then $\mathcal{S}$ determines a unique rational allocation,
    which is computed exactly;

    \item if $\mathcal{S}$ is satisfiable but
    \[
    \ell_i<u_i
    \]
    for at least one coordinate $x_i$, then the constraints admit
    multiple allocations.
\end{enumerate}

All bounds are obtained from finite systems of rational linear
constraints, so each required elimination terminates after finitely
many steps. Consequently, uniqueness of the inheritance allocation is
itself decidable by the same terminating Fourier--Motzkin framework.

\subsection{Complexity}

Fourier--Motzkin elimination is terminating and exact, but its
worst-case complexity is doubly exponential in the number of variables,
since the number of constraints can square at each elimination step
\cite{Dantzig1973,Schrijver1986,Chandru1993}. For the fixed, small
number of claimants in typical inheritance configurations this is
acceptable; for larger configurations, standard linear-programming
methods (simplex, interior point) or polyhedral libraries may be
substituted without affecting the correctness argument, since they
decide the same fragment. The choice of Fourier--Motzkin in this paper
is for transparency of the correctness proof, not for asymptotic
efficiency.

\subsubsection{Application to the Historical Bequest}

Consider again the system
\begin{align}
2x+w&=1,
\\
w&=\frac13-\frac14x,
\\
w&\leq\frac13.
\end{align}

Replacing the equalities by pairs of inequalities gives
\begin{align}
w&\leq1-2x,
&
w&\geq1-2x,
\\
w&\leq\frac13-\frac14x,
&
w&\geq\frac13-\frac14x,
\\
w&\leq\frac13.
\end{align}

Eliminating $w$ combines every lower bound with every upper bound. The
two nontrivial projected constraints are
\begin{align}
1-2x &\leq \tfrac13-\tfrac14x
&&\Longrightarrow&
x &\geq \tfrac{8}{21},
\\
\tfrac13-\tfrac14x &\leq 1-2x
&&\Longrightarrow&
x &\leq \tfrac{8}{21}.
\end{align}
Together they give $x=\tfrac{8}{21}$ directly, without appealing to
the retained equality. The legal cap $w\leq\tfrac13$ has already been
used in forming the first projection.

Substitution gives
\begin{equation}
w
=
\frac13-\frac14\left(\frac{8}{21}\right)
=
\frac5{21}.
\end{equation}

Thus the algorithm returns the exact rational allocation
\begin{equation}
\boxed{
x=\frac8{21},
\qquad
w=\frac5{21}.
}
\end{equation}

The result also satisfies the legal upper bound:
\begin{equation}
\frac5{21}\leq\frac13.
\end{equation}

\subsection{Application to an \textit{`Awl} configuration}

Consider a deceased woman leaving a husband and two sisters. The
prescribed shares are $1/2$ for the husband and $2/3$ for the two
sisters, summing to $7/6>1$. Introduce unknowns $x$ (husband),
$y$ (each sister), and a scaling variable $s$, subject to
\begin{align}
x+2y &= 1, &&\text{(Conservation)},
\\
2x - s &= 0, &&\text{(Proportional share of husband)},
\\
3y - s &= 0, &&\text{(Proportional share of each sister)},
\\
s &\leq 1, &&\text{(Estate cap)},
\\
s &\geq 0. &&
\end{align}
The first three equalities are each replaced by a pair of inequalities.
Eliminating $s$ yields the projected constraints
\begin{equation}
2x \leq 3y, \qquad 3y \leq 2x, \qquad 2x \leq 1, \qquad 3y \leq 1.
\end{equation}
The first two combine to $2x=3y$; substituting into $x+2y=1$ gives
\[
x+2y=1,\qquad y=\tfrac{2x}{3}
\quad\Longrightarrow\quad
\tfrac{7x}{3}=1
\quad\Longrightarrow\quad
x=\tfrac{3}{7},\quad y=\tfrac{2}{7}.
\]
Substitution gives $s=6/7$, which satisfies $s\leq 1$.

Thus the algorithm returns
\begin{equation}
\boxed{
x=\tfrac{3}{7},\qquad y=\tfrac{2}{7},\qquad s=\tfrac{6}{7}.
}
\end{equation}

The procedure returns the same allocation as the classical
\textit{`awl} computation. This is expected: \textit{`awl} solves
$\sum s_i X=1$, which is the single-variable case of the general linear
system. The contribution of the algebraic representation is not a
different answer but an explicit structure in which the answer is
derived by a uniform procedure and its correctness is mechanically
checkable.

\subsubsection{Termination and Computational Character}

The decision procedure terminates because the number of variables in a
fixed inheritance configuration is finite and each Fourier--Motzkin
stage eliminates at least one variable. No search through an
unbounded space of numerical assignments is required. All arithmetic
operations are performed over the rational numbers, and therefore the
resulting allocations are exact rather than numerical approximations.

The procedure can consequently be summarized as
\begin{equation}
\boxed{
\begin{aligned}
\text{inheritance configuration}
&\longrightarrow
\text{finite rational constraint system}
\\
&\longrightarrow
\text{equality normalization}
\\
&\longrightarrow
\text{successive Fourier--Motzkin elimination}
\\
&\longrightarrow
\text{consistency decision}
\\
&\longrightarrow
\text{exact solution reconstruction}.
\end{aligned}}
\end{equation}

This establishes algorithmic decidability for every fixed inheritance
configuration represented by a finite system of rational linear
constraints, as licensed by the representation lemma of
\S\ref{sec:rep}. The result is independent of the completeness of
$T_d$ by construction: the former is a property of a terminating
computational procedure, whereas the latter is a property of the
first-order theory associated with the intended structure.


\section{Methodological interpretation: the Fiqh-domain analogy}
\label{sec:fiqh}

This section states a methodological analogy, not a theorem of the
present paper. The notion of domain-relative computational completeness
is defined and established in the companion framework \cite{CompFiqh};
the present section records how the inheritance construction fits that
pattern.

The significance of the preceding construction for the broader Fiqh
framework is methodological. Al-Khw\={a}rizm\={i}'s inheritance
treatment provides a general computational method for a specified
class of jurisprudential problems. The formal machinery used in the
present framework is different, but the computational objective is
analogous.

For a closed Fiqh chapter $CC$, the
\textit{al-sabr wa-al-taqs\={i}m} procedure exhaustively generates the
specified case space. The resulting Horn Fiqh axiom set $A_{CC}$ is
complete with respect to that specified finite case space, as
established by the completeness results developed elsewhere in the
formal framework \cite{CompFiqh}.

If the Horn deductive engine is complete for the rule class specified
by the framework, then every ruling represented in $A_{CC}$ is
derivable by the engine. The resulting computational chain is
\begin{equation}
\text{complete case generation}
\longrightarrow
\text{complete axiom set }A_{CC}
\longrightarrow
\text{complete Horn deduction}.
\end{equation}

The analogy holds under three conditions:
\begin{enumerate}
\item[(C1)] the case space of the closed Fiqh chapter is finite and
exhaustively generated by \textit{al-\d{s}abr wa-al-taqs\={i}m};
\item[(C2)] the resulting axiom set $A_{CC}$ is Horn and complete with
respect to that case space \cite{CompFiqh};
\item[(C3)] the deductive engine is complete for the specified Horn
rule class \cite{CompFiqh}.
\end{enumerate}
Under (C1)--(C3), domain-relative computational completeness holds for
the chapter.

Relative to the specified ontology, question space, axioms, and
deductive rule class, this yields
\emph{domain-relative computational completeness}.

The analogy with Al-Khw\={a}rizm\={i}'s inheritance algebra is
methodological rather than a claim that the historical mathematical
machinery was identical to the modern formal machinery. In
Al-Khw\={a}rizm\={i}'s treatment, a delimited class of inheritance
problems is represented through algebraic relations over rational
quantities and resolved by a general computational procedure. In the
present framework, a delimited Fiqh domain is represented through
exhaustive case generation, Horn-clause axiomatization, and a
terminating deductive engine.

In both cases, the objective is the computational treatment of an
entire specified class of jurisprudential problems.

The analogy is methodological, not historical: it does not claim that
Al-Khw\={a}rizm\={i}'s procedure was a Horn deduction, nor that the
Fiqh framework reduces to linear algebra. What is shared is the
pattern: a delimited class of jurisprudential problems, an explicit
finite representation, and a terminating procedure that resolves every
instance in the class.


\section{Conclusion}

The formalization developed here provides a modern mathematical
interpretation of the computational character of Al-Khw\={a}rizm\={i}'s
inheritance algebra. A specified inheritance problem can be represented
as a finite system of rational linear constraints and solved exactly
by a terminating procedure. For a fixed number of claimant coordinates,
the associated mathematical structure is represented by $\mathbb{Q}^d$,
and every admissible realization of that structure is isomorphic to the
intended model.

The complete first-order theory of the intended model is complete by
construction, as a stated design property of the formalization. The
inheritance-relevant existential fragment generated by the
representation lemma is algorithmically decidable by Fourier--Motzkin
elimination. These are separate formal properties, and the paper does
not claim full decidability of $T_d$.

The analysis of the Qur'anic inheritance verses further establishes an
important epistemological distinction between linguistic description,
semantic interpretation, and formal structure. Ambiguity in a natural
language expression does not by itself imply inconsistency in the formal
system represented by that expression. A contradiction obtained under
one selected semantic interpretation establishes, at most, a failure of
that interpretation; it does not establish unsatisfiability of the
underlying axioms unless the interpretation is shown to be uniquely
compelled by the linguistic evidence. This distinction has an important
internal analogue in the Qur'an's own epistemological account of its
text, particularly its distinction between \textit{\={a}y\={a}t
mu\d{h}kam\={a}t}, described as the foundation of the Book, and
\textit{\={a}y\={a}t mutash\={a}bih\={a}t}, whose interpretation is not
equally immediate or determinate (Qur'an 3:7). In the context of
Computational Fiqh, this yields a methodological principle: the
transition from natural-language jurisprudential statements to formal
semantics should be made explicit, so that alternative interpretations
can be formally represented and their resulting structures evaluated
for consistency.

The contribution of the algebraic representation is representational
rather than exclusive: once the inheritance structure is represented
by rational linear coordinates, the representation is canonical within
the admissible class, and every question arising from an inheritance
configuration reduces to an existential question over a finite rational
linear constraint system that is decidable exactly. This explains why
the pre-algebraic and algebraic procedures agree on allocations, and
why the algebraic representation makes the structure of the problem
explicit in a way that the pre-algebraic procedure does not.

The broader significance is methodological. A delimited class of
jurisprudential problems can be represented by an explicit formal
structure together with a general computational procedure for resolving
its instances. The broader Fiqh framework applies the same principle
through exhaustive case generation, complete axiomatization, and
terminating Horn-clause deduction.

The resulting notion of \emph{domain-relative computational completeness}
does not assert universal formalization. It asserts that, for a
specified closed Fiqh domain and relative to its explicitly defined
ontology, question space, axioms, and deductive mechanism, the relevant
cases can be completely represented and computationally resolved.



\begin{thebibliography}{99}

\bibitem{Rashed1994}
Rashed, R. (1994). \textit{The Development of Arabic Mathematics: Between Arithmetic and Algebra}. Springer-Verlag.

\bibitem{Berggren1986}
Berggren, J. L. (1986). \textit{Episodes in the Mathematics of Medieval Islam}. Springer-Verlag.

\bibitem{Said1978}
Said, E. W. (1978). \textit{Orientalism}. Pantheon Books.

\bibitem{Tibawi1964}
Tibawi, A. L. (1964). English-speaking Orientalists: A Critique of their Approach to Islam and Arab Nationalism. \textit{Islamic Quarterly}, 8(1/2), 25--45.

\bibitem{Crone1987}
Crone, P. (1987). \textit{Meccan Trade and the Rise of Islam}. Princeton University Press.

\bibitem{Burton1990}
Burton, J. (1990). \textit{The Sources of Islamic Law: Islamic Theories of Abrogation}. Edinburgh University Press.

\bibitem{Powers2011}
Powers, D. S. (2011). \textit{Muhammad is Not the Father of any of Your Men: The Making of the Last Prophet}. University of Pennsylvania Press.

\bibitem{Popper1963}
Popper, K. (1963). \textit{Conjectures and Refutations: The Growth of Scientific Knowledge}. Routledge.

\bibitem{Hallaq2013}
Hallaq, W. B. (2013). \textit{The Impossible State: Islam, Politics, and Modernity's Moral Predicament}. Columbia University Press.

\bibitem{Deroche2014}
D\'{e}roche, F. (2014). \textit{Qur'ans of the Umayyads: A First Overview}. Brill.

\bibitem{Sadeghi2012}
Sadeghi, B., \& Goudarzi, M. (2012). \d{S}an`\={a}' 1 and the Origins of the Qur'\={a}n. \textit{Der Islam}, 87(1-2), 1--129.

\bibitem{Oaks2014}
Oaks, J. (2014). Al-Khw\={a}rizm\={i} and the New Genre of Algebra. \textit{Historia Mathematica}, 41(4), 459--473.

\bibitem{Dantzig1973}
Dantzig, G.~B., \& Eaves, B.~C. (1973). Fourier--Motzkin elimination and its dual. \textit{Journal of Combinatorial Theory, Series A}, 14(3), 288--297.

\bibitem{Schrijver1986}
Schrijver, A. (1986). \textit{Theory of Linear and Integer Programming}. Wiley.

\bibitem{Chandru1993}
Chandru, V. (1993). Variable elimination in linear constraints. \textit{The Computer Journal}, 36(5), 463--472.

\bibitem{Marker2002}
Marker, D. (2002). \textit{Model Theory: An Introduction}. Springer.

\bibitem{CompFiqh}
Elnaserledinellah Mahmoud Abdelwahab. Theoretical Foundations of Computational Islamic Jurisprudence (Fiqh). 2026. ⟨hal-05739614⟩

\end{thebibliography}
\end{document}